\setlist[enumerate]{leftmargin=.5in}
\setlist[itemize]{leftmargin=.5in}
\crefname{hypothesis}{Hypothesis}{Hypotheses}
\crefname{fact}{Fact}{Facts}
\def\rF{\mathbb{F}}
\def\R{\mathbb{R}}
\def\N{\mathbb{N}}
\def\Best{\mathop{\rm Best}}
\def\diam{\mathop{\rm diam}}
\def\intr{\mathop{\rm int}}
\def\argmin{\mathop{\rm arg\, min}}
\def\cE{\mathbb{E}}
\def\B{{\mathcal B}}
\def\P{{\mathcal P}}
\def\S{{\mathcal S}}
\def\sX{{\mathsf X}}
\def\sA{{\mathsf A}}
\def\sH{{\mathsf H}}
\def\sM{{\mathsf M}}
\def\sR{{\mathsf R}}
\def\sE{{\mathsf E}}
\title{Existence of $\epsilon$-Nash Equilibria in Nonzero-Sum and Zero-Sum Markov Games with Standard Borel Spaces via Finite Model Approximations
}
\author{Naci Saldi\thanks{Department of Mathematics, Bilkent University, Cankaya, Ankara, Turkey 
  (\email{naci.saldi@bilkent.edu.tr}).}
\and G\"{u}rdal Arslan\thanks{Department of Electrical Engineering, University of Hawaii at Manoa, Honolulu, HI, USA 
  (\email{gurdal@hawaii.edu}).}
\and Serdar Y\"{u}ksel\thanks{Department of Mathematics and Statistics, Queen's University, Kingston, Ontario, Canada 
	(\email{yuksel@queensu.ca}).}}
\begin{document}

\maketitle

\begin{abstract}
Establishing the existence of exact or near Markov or stationary perfect Nash equilibria in nonzero-sum Markov games over Borel spaces is a challenging problem with limited positive results. Motivated by problems in multi-agent and Bayesian learning, this paper demonstrates the existence of approximate Markov and stationary Nash equilibria for such games under mild regularity conditions. Our approach is constructive: For both compact and non-compact state spaces, we approximate the Borel model with finite state-action models and show that their equilibria correspond to \(\epsilon\)-equilibria for the original game. Compared with previous results in the literature, which we comprehensively review, we provide more general and complementary conditions, along with explicit approximation models whose equilibria are $\epsilon$-equilibria for the original model. For completeness, we also study the approximation of zero-sum Markov games and Markov teams to highlight the key differences between zero-sum and nonzero-sum settings. In particular, while for zero-sum and team games, joint weak (Feller) continuity of the transition kernel is sufficient (as the value function is continuous), this is not the case for general nonzero-sum games. 
\end{abstract}

\begin{keywords}
Non-zero sum stochastic games, discounted cost, approximate Nash equilibria.
\end{keywords}

\begin{MSCcodes}
91A15, 91A10, 91A50, 93E20
\end{MSCcodes}

\section{Introduction}\label{introS}

Shapley \cite{Sha53} established the existence of stationary equilibria for discounted zero-sum stochastic (or Markov) games games. Fink later extended Shapley’s model to the nonzero-sum case \cite{Fin64}, demonstrating that stationary equilibria exist for discounted games with finite state and action spaces. This result, proved using the Kakutani-Fan-Glicksberg fixed point theorem (see \cite[Corollary 17.55]{AlBo06}), applied to the best-response correspondence, is highly significant as it enables the analysis of scenarios where players’ interests are not strictly opposed. However, Fink's approach cannot be extended to models with Borel state spaces due to the lack of an appropriate topology on the policy space that satisfies the conditions required by the Kakutani-Fan-Glicksberg theorem. In contrast, for zero-sum games, the minimax theorem \cite[Theorem 1]{Fan} suffices to establish the existence of equilibria, even for models with standard Borel state spaces (see \cite{nowak2003zero}). Building on this, \cite{mamer1986zero} (see also \cite[Theorem 3.4]{balder1988generalized} and \cite[Theorem 3.2]{HBY2020arXiv}) showed that saddle-point equilibria exist under relatively mild conditions.

For uncountable state spaces, proving the existence of stationary or Markov perfect Nash equilibria in nonzero-sum games is considerably more challenging. A standard approach to addressing this involves the use of an auxiliary "one-shot" game with absorbing costs \cite{mertens2015repeated}. The proof typically consists of two key steps: (i) If there exists a vector function \(f\) on the state space such that, for every initial state \(x\), the one-shot game with absorbing cost \(f\) has an equilibrium yielding the players an expected cost of \(f(x)\), then \(f\) represents an equilibrium cost in the discounted game. This characterization aligns closely with the Bellman optimality equation in discounted Markov decision processes (MDPs) and the optimality equation involving the min-max operator in discounted zero-sum Markov games. (ii) Establishing the existence of such a function \(f\) in the first step typically requires a fixed point theorem. For discounted MDPs and zero-sum discounted Markov games, Banach’s fixed point theorem suffices. However, in the nonzero-sum case, the complexity increases significantly, necessitating the use of Kakutani-Fan-Glicksberg’s fixed point theorem. This theorem imposes stringent conditions, including the compactness and convexity of the strategy space and the upper hemicontinuity of the best-response correspondence, and additionally the resulting value function is typically only measurable (and not continuous in general, unlike the zero-sum setup) making the nonzero-sum setting substantially more intricate. For finite state spaces, a fixed point in the strategy space can often be identified as done in \cite{Fin64}, and this result extends to countable state spaces. However, the extension to uncountable state spaces necessitates working within the space of measurable cost functions defined on the state space.  At this stage, difficulties arise due to the lack of general continuity and compactness properties (see e.g. \cite[Section 7.2]{saldiyukselGeoInfoStructure}). To address this issue, some further relaxations are often imposed: (i) Introducing a correlation device, as in \cite{NoRa92}, which allows for the establishment of correlated stationary equilibria, or, requiring that the transition law is absolutely continuous with respect to some fixed nonatomic probability distribution \cite{MePa91} and thus avoiding the usage of any additional external public random signal, (ii) imposing restrictive separability or independence conditions on the transition kernels and cost functions \cite{himmelberg1976existence} \cite{parthasarathy1989existence} required state independence of transitions, or (iii) imposing the usage of memory \cite{JaNo16}. 


Due to the difficulties outlined above, research on the existence of Markov or stationary equilibria for non-zero-sum discounted Markov games with standard Borel spaces has been fragmented. \cite{levy2013discounted,levy2015corrigendum} reported the non-existence of stationary equilibrium policies for such games with continuous spaces. Notably, there have been very few positive results regarding the existence of stationary equilibria in discounted Markov games with continuous state spaces. For example, \cite{himmelberg1976existence} imposed restrictive separability conditions on the transition kernels and cost functions, \cite{parthasarathy1989existence} required state independence of transitions, and \cite{JaNo16} presented conditions for equilibrium under policies with memory. We refer the reader to \cite[Chapter 6]{BaZa18} for a comprehensive discussion on this topic. Nevertheless, despite the general challenges in proving the existence of Markov or stationary equilibria in nonzero-sum games, the existence of history-dependent subgame perfect equilibria can still be established under mild conditions \cite{Sol98,MePa03}. 

The situation in the finite horizon case allows for more positive results than in the infinite horizon discounted cost scenario. A seminal study on this topic is conducted by Rieder \cite{rieder1979equilibrium}, who examines finite-horizon Markov games with standard Borel spaces. His approach employs a direct method based on backward induction, combined with an ingenious measurable selection argument. This analysis is specifically tailored for finite horizons. In Rieder's work, the existence of Markov policies is established, which is a common consideration in finite-horizon settings. \cite{rieder1979equilibrium} imposes a setwise continuity condition on the transition kernel.

As is common in the literature, if an exact solution cannot be found or its existence cannot be established, one seeks approximate solutions. To this end, in this paper, we investigate the existence of $\varepsilon$-Nash equilibria obtained through finite state-action approximations under the most general conditions known to us. Our primary motivation for addressing this problem stems from its applications in multi-agent learning algorithms, as explicitly discussed in \cite{yongacoglu2023satisficing,yongacoglu2024independent}. These studies highlight the convergence to $\varepsilon$-equilibrium policies through policy revision processes along $\varepsilon$-satisficing paths \cite{yongacoglu2024independent} (where an agent revises a policy only when they are not $\varepsilon$-satisfied). Specifically, the existence of $\varepsilon$-equilibria is a sufficient condition for ensuring the convergence of the independent learning algorithms developed in these studies for a large class of stage games. 

On the existence of $\varepsilon$-equilibria for games with uncountable state and action spaces, Whitt \cite{whitt1980representation} examines approximations under conditions more stringent than ours, notably requiring a uniform version of total variation convergence of the transition probability, and does so without explicitly constructing the approximating models. In contrast, Nowak \cite{nowak1985existence} imposes conditions that are complementary to ours, where \cite{nowak1985existence} requires that the state space be a countably generated measurable space and also that the transition kernel have a density with respect to a reference measure. Using this density assumption, \cite{nowak1985existence} treats the density as an element of a function-valued $L_1$-space and establishes the existence of an approximate model with countably many states via the separability of this $L_1$-space. Consequently, Nowak does not explicitly construct the approximating models, and the approximate model generally has countably many elements, unlike in our case. Building on similar ideas as in \cite{nowak1985existence}, Nowak and Altman \cite{NoAl02} address the same approximation problem for unbounded one-stage costs under discounted and average cost criteria. For the average cost criterion, they assume geometric ergodicity type conditions as it is common for the average cost criterion. In our approach, (i) the approximation is constructive and explicit; (ii) a density assumption is not required, and (iii) compactness is not required. 


Our constructive approximation result and method are achieved via an explicit finite-game construction, building on recent work \cite{SaYuLi15c,SaLiYuSpringer} that developed finite approximations for MDPs with standard Borel spaces. These methods were initially applied under the assumption of weak continuity for the transition probabilities and were shown to yield near-optimal approximations. In our current study, given the game-theoretic nature of the problem, we impose more stringent conditions on the kernel than weak continuity. This is because, unlike standard weakly continuous MDPs—where value functions are continuous in the state—value functions in nonzero-sum games generally lack this regularity. Indeed, as Rieder suggests, in such cases, we may need to settle for merely measurable value functions. Nonetheless, our conditions are still less restrictive than total variation continuity, while being more demanding than setwise continuity. Additionally, we extend our analysis to encompass non-compact state spaces. 

\smallskip

{\bf Contributions of the Paper.} (i) We present conditions on the existence of near Markov and stationary perfect Nash equilibria for nonzero-sum stochastic games with Borel spaces, under both finite-horizon and discounted cost criteria. Our results apply to both compact and non-compact state spaces, and complement and generalize the results reported in the literature, as reviewed above.  (ii) Furthermore, we establish the existence of near Markov and stationary perfect Nash equilibria through a finite state-action model approximation of the original model. Our finite model construction is explicit, based on the method developed in \cite{SaYuLi15c,SaLiYuSpringer} for obtaining finite approximations of MDPs with Borel spaces. Specifically, we construct a finite Markov game model and demonstrate that, for every $\varepsilon > 0$, a sufficiently fine approximation of the original model exists. This approximation ensures the existence of an equilibrium for the finite model (as guaranteed by \cite{Fin64,rieder1979equilibrium}), which serves as an $\varepsilon$-Nash equilibrium for the original problem. Thus, our result not only establishes existence but also provides an explicit method to compute or learn near Markov perfect equilibria. This approach has implications for multi-agent learning problems involving general spaces and information structures beyond finite models \cite{altabaa2023decentralized,yongacoglu2024independent}. (iii) By demonstrating continuous convergence as the approximation becomes finer, our contribution also establishes a positive result on the continuous dependence of equilibria in the refinement of information structures (that is, as the quantization of information gets more refined) within Markov games—a question for which there are few positive results \cite{hogeboom2023continuity}. (iv) We also explore the approximation of zero-sum discounted Markov games and Markov teams. Establishing the existence of stationary equilibria in zero-sum setting is considerably simpler than in nonzero-sum setting (due to continuity properties of the value function in zero-sum and team games, unlike the nonzero-sum setting). Therefore, this section serves to highlight the key differences between zero-sum and nonzero-sum settings, emphasizing why the latter presents a more challenging analysis. 

{\bf Notation.} For a metric space $\sE$, the Borel $\sigma$-algebra (the smallest $\sigma$-algebra that contains the open sets of $\sE$) is denoted by $\B(\sE)$.  We let $B(\sE)$ and $C_b(\sE)$ denote the set of all bounded Borel measurable and continuous real functions on $\sE$, respectively. For any $u \in C_b(\sE)$ or $ u \in B(\sE)$, let $\|u\| \coloneqq \sup_{e \in \sE} |u(e)|$
which turns $C_{b}(\sE)$ and $B(\sE)$ into Banach spaces. Let $\P(\sE)$ denote the set of all probability measures on $\sE$. A sequence $\{\mu_n\}$ of probability measures on $\sE$ is said to converge weakly (resp., setwise) (see \cite{HeLa03}) to a probability measure $\mu$ if $\int_{\sE} g(e) \mu_n(de)\rightarrow \int_{\sE} g(e) \mu(de) \text{ for all } g\in C_b(\sE)$ (resp., for all $g \in B(\sE)$). For any $\mu,\nu \in \P(\sE)$, the total variation distance between $\mu$ and $\nu$, denoted as $\|\mu - \nu\|_{TV}$, is equivalently defined as
\begin{align}
	\|\mu-\nu\|_{TV} &\coloneqq 2 \sup_{D \in \B(\sE)} |\mu(D) - \nu(D)| = \sup_{\|g\| \leq 1} \biggl| \int_{\sE} g(e) \mu(de) - \int_{\sE} g(e) \nu(de) \biggr|. \nonumber
\end{align}
Unless otherwise specified, the term `measurable' will refer to Borel measurability.

\section{Nonzero Sum Stochastic Games}\label{game_model}

A discrete-time nonzero sum stochastic game can be described by a tuple
\begin{align}
	\bigl( \sX, \sA^1,\ldots, \sA^N, c^1,\ldots, c^N, p \bigr), \nonumber
\end{align}
where Borel spaces (i.e., Borel subsets of complete and separable metric spaces) $\sX$ and $\{\sA^i\}_{i=1}^N$ denote the \emph{state} and \emph{action} spaces, respectively. The \emph{stochastic kernel} 
$$
p: \sX \times {\boldsymbol \sA} \ni (x,{\bf a}) \mapsto p(\cdot|x,{\bf a}) \in \P(\sX)
$$
denotes the \emph{transition probability} of the next state given that previous state and actions are $(x,\bf a)$ (see \cite{HeLa96}), where 
$$
{\boldsymbol \sA} \coloneqq \prod_{i=1}^N \sA^i, \,\,\,\, {\bf a} = (a^1,\ldots,a^N).
$$
Hence, it satisfies: (i) $p(\,\cdot\,|x,\bf a)$ is an element of $\P(\sX)$ for all $(x,\bf a) \in \sX \times \boldsymbol \sA$, and (ii) $p(D|\,\cdot\,,\,\cdot\,)$ is a measurable function from $\sX\times{\boldsymbol \sA}$ to $[0,1]$ for each $D\in\B(\sX)$. The \emph{one-stage cost} function $c^i$ for player~$i$ is a measurable function from $\sX \times {\boldsymbol \sA}$ to $\R$.

Define the history spaces 
$$
\sH_0 = \sX, \,\,\,\, \sH_{t}=(\sX\times{\boldsymbol \sA})^{t}\times\sX, \,\, t \geq 1 
$$
endowed with their product Borel $\sigma$-algebras generated by $\B(\sX)$ and $\B(\sA^i)$, $i=1,\ldots,N$. A \emph{policy} for player~$i$ is a sequence $\pi^i=\{\pi_{t}^i\}$ of stochastic kernels on $\sA^i$ given $\sH_{t}$. The set of all policies for player~$i$ is denoted by $\Pi^i$. Let $\Phi^i$ denote the set of stochastic kernels on $\sA^i$ given $\sX$, and let $\rF^i$ denote the set of all measurable functions from $\sX$ to $\sA^i$. A \emph{randomized Markov} policy for player~$i$ is a sequence $\pi^i=\{\pi_{t}^i\}$ of stochastic kernels on $\sA^i$ given $\sX$. A \emph{deterministic Markov} policy is a sequence of stochastic kernels $\pi^i=\{\pi_{t}^i\}$ on $\sA^i$ given $\sX$ such that $\pi_{t}^i(\,\cdot\,|x)=\delta_{f_t(x)}(\,\cdot\,)$ for some $f_t \in \rF^i$, where $\delta_z$ denotes the point mass at $z$. The set of randomized and deterministic Markov policies for player~$i$ are denoted by $\sR\sM^i$ and $\sM^i$, respectively. A \emph{randomized stationary} policy for player~$i$ is a
constant sequence $\pi^i=\{\pi_{t}^i\}$ of stochastic kernels on $\sA^i$ given $\sX$ such that
$\pi_{t}^i(\,\cdot\,|x)=\varphi(\,\cdot\,|x)$ for all $t$ for some
$\varphi \in \Phi^i$. A \emph{deterministic stationary} policy is a constant sequence of stochastic kernels $\pi^i=\{\pi_{t}^i\}$ on $\sA^i$ given $\sX$ such that $\pi_{t}^i(\,\cdot\,|x)=\delta_{f(x)}(\,\cdot\,)$ for all $t$ for some
$f \in \rF^i$. The set of randomized and deterministic stationary policies for player~$i$ are identified with the sets $\Phi^i$ and $\rF^i$, respectively. Hence, we have 
$$
\rF^i \subset \sM^i \subset \Pi^i, \,\,\, \Phi^i \subset \sR\sM^i \subset \Pi^i, \,\,\, \rF^i \subset \Phi^i, \,\,\, \sM^i \subset \sR\sM^i.
$$
According to the Ionescu Tulcea theorem (see \cite{HeLa96}), an initial distribution $\mu$ on $\sX$ and a joint policy ${\boldsymbol \pi} \coloneqq (\pi^1,\ldots,\pi^N)$ define a unique probability measure $P_{\mu}^{\boldsymbol \pi}$ on $\sH_{\infty}=(\sX\times{\boldsymbol \sA})^{\infty}$.
The expectation with respect to $P_{\mu}^{\boldsymbol \pi}$ is denoted by $\cE_{\mu}^{\boldsymbol \pi}$. If $\mu=\delta_x$, we write $P_{x}^{\boldsymbol \pi}$ and $\cE_{x}^{\boldsymbol \pi}$ instead of $P_{\delta_x}^{\boldsymbol \pi}$ and $\cE_{\delta_x}^{\boldsymbol \pi}$. For player~$i$, the cost functions to be minimized in this paper are the finite-horizon cost and the $\beta$-discounted cost, respectively given by
\begin{align}
	J^i(\boldsymbol \pi,x) &= \cE_{x}^{\boldsymbol \pi}\biggl[\sum_{t=0}^{T-1} c^i(x_{t},{\boldsymbol a}_{t})\biggr], \nonumber \\
	J^i(\boldsymbol \pi,x) &= \cE_{x}^{\boldsymbol \pi}\biggl[\sum_{t=0}^{\infty}\beta^{t} \, c^i(x_{t},{\boldsymbol a}_{t})\biggr]. \nonumber
\end{align}

\begin{remark}
	We observe that the infinite sum $\sum_{t=0}^{\infty} \beta^t c^i(x_t, a_t)$ may not be finite or well-defined in the definition of $J^i$ if $c^i$ is assumed only to be measurable. However, additional assumptions introduced in subsequent sections guarantee the well-defined nature of $J^i$.
\end{remark}

In the definition below, the cost is either finite-horizon or discounted. 

\begin{definition}[Nash equilibrium]
	A joint policy $\boldsymbol \pi^{*}$ is said to be $\varepsilon$-Nash equilibrium ($\varepsilon \geq 0$) if 
	$$
	J^i(\boldsymbol \pi^{*},x) \leq  \inf_{\pi^i \in \Pi^i} J^i(\boldsymbol \pi^{-i},\pi^i,x) + \varepsilon \,\, \forall x \in \sX,
	$$
	for all $i=1,\ldots,N$, where $\boldsymbol \pi^{-i} \coloneqq \boldsymbol \pi \setminus \{\pi^i\}$. If $\varepsilon = 0$, it is called Nash equilibrium. 
\end{definition}

In nonzero-sum stochastic games, the primary objective is to establish the existence of (and, if possible, compute or learn) a Markov perfect Nash equilibrium for the finite-horizon case and a stationary perfect Nash equilibrium for the discounted case. As noted, this task becomes especially challenging when dealing with uncountable Borel spaces in the state and action spaces. In contrast, for finite cases—where both the state and action spaces are finite—the existence of such equilibria can be established relatively easily, as shown in \cite{Fin64,rieder1979equilibrium}. To address the challenge of establishing approximate Markov or stationary perfect Nash equilibria in uncountable cases, we employ a finite approximation method. By approximating our model with a finite one, we prove that the stationary or Markov perfect Nash equilibrium of the finite model serves as an approximate equilibrium for the original problem. 

\section{$\delta$-Approximation of $N$-player Game} \label{finite_model}

In this section, we construct the finite approximation of the game model introduced in the previous section. We impose the assumptions below on the components of the original nonzero sum stochastic game.

\begin{tcolorbox}
	[colback=white!100]
	\begin{assumption}
		\label{compact:as1}
		\begin{itemize}
			\item [  ]
			\item [(a)] The one-stage cost functions $c^i$ are in $C_b(\sX \times \boldsymbol \sA)$.
			\item [(b)] The stochastic kernel $p(\,\cdot\,|x, \boldsymbol a)$ is setwise continuous in $(x,\boldsymbol a)$.
			\item [(c)] $\sX$ and $\sA^i$, $i=1,\ldots,N$, are compact.
		\end{itemize}
	\end{assumption}
\end{tcolorbox}

Let $d_{\sX}$ and $d_{\sA^i}$ denote the metric on $\sX$ and $\sA^i$, respectively, for all $i=1,\ldots,N$. Fix any $\delta > 0$. Since the state space $\sX$ and action spaces $\sA^i$ are assumed to be compact, one can find 
finite sets $\sX_{\delta} = \{x_1,\ldots,x_{k_{\delta}}\}$ and $\sA_{\delta}^i = \{a_1^i,\ldots,a_{h_{\delta}}^i\}$ such that they are $\delta$-nets in the corresponding uncountable spaces. Define functions $Q_{\delta}$ and $Q_{i,\delta}$ by
\begin{align}
	Q_{\delta}(x) &\coloneqq \argmin_{z \in \sX_{\delta}} d_{\sX}(x,z),\nonumber \\
	Q_{i,\delta}(a^i) &\coloneqq \argmin_{b^i \in \sA_{\delta}^i} d_{\sA^i}(a^i,b^i), \nonumber 
\end{align}
where ties are broken so that functions are measurable. Note that $Q_{\delta}$ induces a partition $\{\S_{\delta}^i\}_{i=1}^{k_{\delta}}$ on the state space $\sX$ given by
\begin{align}
	\S_{\delta}^i = \{x \in \sX: Q_{\delta}(x) = x_{i}\}, \nonumber
\end{align}
with diameter $\diam(\S_{\delta}^i) \leq 2 \, \delta$. Let $\nu_{\delta}$ be a probability measures on $\sX$ satisfying
\begin{align}
	\nu_{\delta}(\S_{\delta}^i) > 0 \text{  for all  } i=1,\ldots,k_{\delta}.  \nonumber
\end{align}
We let $\nu_{\delta}^i$ be the restriction of $\nu_{\delta}$ to $\S_{\delta}^i$ defined by
\begin{align}
	\nu_{\delta}^i(\,\cdot\,) \coloneqq \frac{\nu_{\delta}(\,\cdot\,)}{\nu_{\delta}(\S_{\delta}^i)}. \nonumber
\end{align}
The measures $\nu_{\delta}^i$ will be used to define a finite game model with resolution $\delta$. To this end, the one-stage cost functions $c_{\delta}^i: \sX_{\delta}\times \boldsymbol \sA \rightarrow \R$ and the transition probability $p_{\delta}$ on $\sX_{\delta}$ given $\sX_{\delta}\times \boldsymbol \sA$ are defined by
\begin{align}
	c_{\delta}^i(x_i,\boldsymbol a) &\coloneqq \int_{\S_{\delta}^i} c^i(x,\boldsymbol a) \, \nu_{\delta}^i(dx), \nonumber \\ \\
	p_{\delta}(\,\cdot\,|x_i,\boldsymbol a) &\coloneqq \int_{\S_{\delta}^i} Q_{\delta} \ast p(\,\cdot\,|x, \boldsymbol a) \,  \nu_{\delta}^i(dx), \nonumber
\end{align}
where $Q_{\delta}\ast p(\,\cdot\,|x, \boldsymbol a) \in \P(\sX_{\delta})$ is the pushforward of the measure $p(\,\cdot\,|x,\boldsymbol a)$ with respect to $Q_{\delta}$. For each $\delta$, we define $\delta$-approximation of the original game as a finite nonzero sum stochastic game with the following components: $\sX_{\delta}$ is the state space, $\{\sA^i_{\delta}\}_{i=1}^N$ are the action spaces, $p_{\delta}$ is the transition probability and $\{c_{\delta}^i\}_{i=1}^N$ are the one-stage cost functions. History spaces, policies and cost functions are defined in a similar way as in the original model. To distinguish them from the original game model, we add $\delta$ as a subscript in each object for the finite model.

\section{Finite-Horizon Cost}\label{finite_horizon}

Here, we consider the approximation problem for the finite-horizon cost criterion. Throughout this section, Assumption~\ref{compact:as1} is assumed to hold. We first introduce the best response mappings in the original and approximate models. Then, we state the approximation result. However, before proceeding, let us define the subgame perfect Nash equilibrium for the finite-horizon cost criterion.

\begin{definition}[Subgame perfect Nash equilibrium]
	A joint policy $\boldsymbol \pi^{*}$ is said to be subgame perfect $\varepsilon$-Nash equilibrium ($\varepsilon \geq 0$) if, for each $i=1,\ldots,N$, we have
	$$
	\cE^{\boldsymbol \pi^{*}}\biggl[\sum_{k=t}^{T-1} c^i(x_{k},{\boldsymbol a}_{k})\, \bigg| \, h_t \, \biggr] \leq \inf_{\pi^i \in \Pi^i} \cE^{(\boldsymbol \pi^{*,-i},\pi^i)}\biggl[\sum_{k=t}^{T-1} c^i(x_{k},{\boldsymbol a}_{k})\, \bigg| \, h_t \, \biggr] + \varepsilon,
	$$
	for all $ h_t \in \sH_t$ and $t=0,\ldots,T-1$, where in the expectations starting from time $t$, policies prior to time $t$ are considered irrelevant, while other policies that utilize information preceding time $t$ rely on a fixed historical variable, denoted as $h_t$. If $\varepsilon = 0$, it is called subgame perfect Nash equilibrium. 
\end{definition}

We note that it suffices to consider Markovian policies in the infimum on the right-hand side of the expression in the above definition though, as noted earlier, existence properties may be lost in some game models. In the following, we will consider Markovian policies (and thus condition on the last state $x_t$ in the history variable $h_t$), and thus arrive at {\it Markov perfect} Nash equilibrium solutions. 

\subsubsection*{Best Response Mapping} Given some fixed Markov policies $\boldsymbol \pi^{-i}$ of all players except player~$i$, the player~$i$ best response is characterized via dynamic programming principle: 
\begin{align}\label{optimality_eq}
	T^{\boldsymbol \pi^{-i}}_t J_{t+1}^*(\boldsymbol \pi^{-i}; \cdot) = J_{t}^*(\boldsymbol \pi^{-i}; \cdot), \,\, \text{for} \,\, t=0,\ldots,T-1, 
\end{align}
where the operator $T^{\boldsymbol \pi^{-i}}_t: B(\sX) \rightarrow B(\sX)$ is defined as 
\begin{align*}
	T^{\boldsymbol \pi^{-i}}_t J(x) &\coloneqq \min_{a^i \in \sA^i} \left[ c^i(x,\boldsymbol \pi^{-i}_t(x),a^i) + \int_{\sX} J(y) \, p(dy|x,\boldsymbol \pi^{-i}_t(x),a^i) \right] \\
	&=\min_{\gamma^i \in \P(\sA^i)} \left[ c^i(x,\boldsymbol \pi^{-i}_t(x),\gamma^i) + \int_{\sX} J(y) \, p(dy|x,\boldsymbol \pi^{-i}_t(x),\gamma^i) \right]
\end{align*}
and $J_{T}^*(\boldsymbol \pi^{-i}; \cdot) = 0$. Here, with an abuse of notation, for any collection of probability measures $(\gamma^1,\ldots,\gamma^N) \in \P(\sA^1) \times \ldots \times \P(\sA^N)$, we define 
\begin{align*}
	c^i(x,\gamma^1,\ldots,\gamma^N) &\coloneqq \int_{\boldsymbol \sX} c^i(x,a^1,\ldots,a^N) \, \gamma^1(da^1) \otimes \ldots \otimes \gamma^N(da^N) \\ \\
	p(\cdot|x,\gamma^1,\ldots,\gamma^N) &\coloneqq \int_{\boldsymbol \sX} p(\cdot|x,a^1,\ldots,a^N) \, \gamma^1(da^1) \otimes \ldots \otimes \gamma^N(da^N).
\end{align*}
In above recursion, $J_{t}^*(\boldsymbol \pi^{-i}; \cdot)$ is the optimal cost-to-go at time time $t$ of the player~$i$, if the policies of other players are fixed as $\boldsymbol \pi^{-i}$: 
\begin{align}\label{optimal_cost}
	J_{t}^*(\boldsymbol \pi^{-i}; x) \coloneqq \inf_{\pi^i \in \Pi^i} \cE^{(\boldsymbol \pi^{-i},\pi^i)}_x \left[ \sum_{l=t}^{T-1} c^i(x_l,\boldsymbol a_l) \right], 
\end{align}
where subscript $x$ in the expectation means that $x_t =x$ and the stochastic kernels after time $t$ in the policies are used only. For instance, $\{\pi^i_l\}_{l=0}^{t-1}$ are irrelevant in this case. If the measurable functions $\pi_t^{*,i}(\cdot;\boldsymbol \pi^{-i})$ ($t=0,\ldots,T-1$) from $\sX$ to $\P(\sA^i)$ minimizes the expression in (\ref{optimality_eq}) for all $x \in \sX$, then it is known that the Markov policy $\pi^{*,i}(\cdot;\boldsymbol \pi^{-i}) = \{\pi^{*,i}_t(\cdot;\boldsymbol \pi^{-i})\}_{t=0}^{T-1}$ is the optimal solution of the optimization problem in (\ref{optimal_cost}) for each $t=0,\ldots,T-1$ (again for each $t$, functions before time $t$ are irrelevant). Hence, we can define the best response of player~$i$ to the joint policy $\boldsymbol \pi^{-i}$ as 
$$
{\Best}_{i}(\boldsymbol \pi^{-i}) = \left\{ \pi^{*,i}(\cdot;\boldsymbol \pi^{-i}): \text{$\pi_t^{*,i}(\cdot;\boldsymbol \pi^{-i})$ ($t=0,\ldots,T-1$)  minimizes (\ref{optimality_eq}) $\forall$ $x \in \sX$} \right\}.  
$$
Using this, we define the best response map of all players as follows:
$$
\Best: \prod_{i=1}^N \sR\sM^i \ni \boldsymbol \pi \mapsto  \prod_{i=1}^N {\Best}_{i}(\boldsymbol \pi^{-i}) \in 2^{^{\prod_{i=1}^N \sR\sM^i}}. 
$$ 
Therefore, a joint policy $\boldsymbol \pi^*$ is Markov perfect Nash equilibrium if $\boldsymbol \pi^* \in \Best(\boldsymbol \pi^*)$.  

For the approximate finite model, similar definitions can be made if we replace 
$\bigl( \sX, \sA^1,\ldots,$ $ \sA^N, c^1,\ldots, c^N, p \bigr)$ with $\bigl( \sX_{\delta}, \sA^1_{\delta},\ldots, \sA^N_{\delta}, c^1_{\delta},\ldots, c^N_{\delta}, p_{\delta} \bigr)$ and integral with summation. In this case, we also add $\delta$ as a subscript to the operators and the optimal cost-to-go functions. 

\subsubsection*{Existence of Approximate Markov Perfect Nash Equilibrium}

For any $\delta > 0$, by \cite{rieder1979equilibrium}, it is known that there exists a Markov perfect Nash equilibrium $\boldsymbol \pi^*_{\delta}$ for the finite $\delta$-approximation of the original game problem. Hence, for all $i=1,\ldots,N$, we have 
\begin{align}\label{optimality_eq_finite}
	T^{\boldsymbol \pi^{*,-i}_{\delta}}_{\delta,t} J_{\delta,t+1}^*(\boldsymbol \pi^{*,-i}_{\delta}; \cdot) = J_{\delta,t}^*(\boldsymbol \pi^{*,-i}_{\delta}; \cdot), \,\, \text{for} \,\, t=0,\ldots,T-1, 
\end{align}
where the operator $T^{\boldsymbol \pi^{*,-i}_{\delta}}_{\delta,t}: B(\sX_{\delta}) \rightarrow B(\sX_{\delta})$ is defined as 
\begin{align*}
	T^{\boldsymbol \pi^{*,-i}_{\delta}}_{\delta,t} J(x_j) 
	&\coloneqq \min_{a^i \in \sA^i_{\delta}} \left\{ \int_{\S_{\delta}^j} \left[ c^i(x,\boldsymbol \pi^{*,-i}_{\delta,t}(x_j),a^i) + \int_{\sX} \hat J(y) \, p(dy|x,\boldsymbol \pi^{*,-i}_{\delta,t}(x_j),a^i) \right] \, \nu_{\delta}^j(dx) \right\} \\
	&= \min_{\gamma^i \in \P(\sA^i_{\delta})} \left\{ \int_{\S_{\delta}^j} \left[ c^i(x,\boldsymbol \pi^{*,-i}_{\delta,t}(x_j),\gamma^i) + \int_{\sX} \hat J(y) \, p(dy|x,\boldsymbol \pi^{*,-i}_{\delta,t}(x_j),\gamma^i) \right] \, \nu_{\delta}^j(dx) \right\},
\end{align*}
$\hat J = J \circ Q_{\delta}$, and $J_{\delta,T}^*(\boldsymbol \pi^{*,-i}_{\delta}; \cdot) = 0$. For each $i=1,\ldots,N$, the minimum in (\ref{optimality_eq_finite}) is achieved by the policies in Markov perfect Nash equilibrium $\boldsymbol \pi^*_{\delta}$. 

We now extend the definition of the operators $T^{\boldsymbol \pi^{*,-i}_{\delta}}_{\delta,t}$ to $B(\sX)$ as follows:
\begin{align*}
	\hat T^{\boldsymbol \pi^{*,-i}_{\delta}}_{\delta,t} J(z) 
	&\coloneqq \min_{a^i \in \sA^i_{\delta}} \left\{ \int_{\S_{\delta}^{i(z)}} \left[ c^i(x,\boldsymbol \pi^{*,-i}_{\delta,t}(z),a^i) + \int_{\sX} \hat J(y) \, p(dy|x,\boldsymbol \pi^{*,-i}_{\delta,t}(z),a^i) \right] \, \nu_{\delta}^{i(z)}(dx) \right\} \\
	&= \min_{\gamma^i \in \P(\sA^i_{\delta})} \left\{ \int_{\S_{\delta}^{i(z)}} \left[ c^i(x,\boldsymbol \pi^{*,-i}_{\delta,t}(z),\gamma^i) + \int_{\sX} \hat J(y) \, p(dy|x,\boldsymbol \pi^{*,-i}_{\delta,t}(z),\gamma^i) \right] \, \nu_{\delta}^{i(z)}(dx) \right\},
\end{align*}
$\hat J = J \circ Q_{\delta}$, and with an abuse of notation, we denote the extended policy $\boldsymbol \pi^{*,-i}_{\delta,t} \circ Q_{\delta}$ as $\boldsymbol \pi^{*,-i}_{\delta,t}$ in order not to complicate the notation further. Here, $i: \sX \rightarrow \{1,\ldots,k_{\delta}\}$ gives the index of the bin to which $z$ belongs. One can prove that  
\begin{align}\label{optimality_eq_finite_extended}
	\hat T^{\boldsymbol \pi^{*,-i}_{\delta}}_{\delta,t} \hat J_{\delta,t+1}^*(\boldsymbol \pi^{*,-i}_{\delta}; \cdot) = \hat J_{\delta,t}^*(\boldsymbol \pi^{*,-i}_{\delta}; \cdot), \,\, \text{for} \,\, t=0,\ldots,T-1, 
\end{align}
where $"$$\,\,\widehat{ }\,\,\,$$"$ means piece-wise constant extensions of functions defined on $\sX_{\delta}$ to $\sX$. 

To prove the next result, we need to put further conditions on the transition probability in addition to Assumption~\ref{compact:as1}-(b). To this end, define the stochastic kernel $p^{\delta}:\sX\times \boldsymbol \sA \rightarrow \P(\sX_{\delta})$ for each $\delta \geq 0$ as 
$$
p^{\delta}(\cdot|x,\boldsymbol a) \coloneqq Q_{\delta} \ast p(\cdot|x,\boldsymbol a). 
$$
Since $p(\cdot|x,\boldsymbol a)$ is setwise continuous, the conditional probability $p(\S_{\delta}^j|x,\boldsymbol a)$ is continuous in $(x,\boldsymbol a)$ for each $j=1,\ldots,k_{\delta}$. Hence, if $(x_n,\boldsymbol a_n) \rightarrow (x,\boldsymbol a)$, then 
\begin{align*}
	\lim_{n \rightarrow \infty} \|p^{\delta}(\cdot|x_n,\boldsymbol a_n) - p(\cdot|x,\boldsymbol a)\|_{TV} &= \lim_{n \rightarrow \infty} \sum_{j=1}^{k_{\delta}} |p^{\delta}(\S_{\delta}^j|x_n,\boldsymbol a_n) - p(\S_{\delta}^j|x,\boldsymbol a)| = 0. 
\end{align*}  
Hence, $p^{\delta}(\cdot|x,\boldsymbol a)$ is continuous in total variation norm. As $\sX$ and $\boldsymbol \sA$ are compact, $p^{\delta}(\cdot|x,\boldsymbol a)$ is also uniformly continuous, and therefore, we can define the modulus of continuity of $p^{\delta}(\cdot|x,\boldsymbol a)$ as 
$$
\omega_{\delta}(r) \coloneqq \sup_{d_{\sX}(x,y) + d_{\boldsymbol \sA}(\boldsymbol a,\boldsymbol b) \leq r} \|p^{\delta}(\cdot|x,\boldsymbol a)-p^{\delta}(\cdot|y,\boldsymbol b)\|_{TV},
$$
which converges to zero as $r \rightarrow 0$. In the assumption below, we want this convergence to be fast enough. 

\begin{tcolorbox}
	[colback=white!100]
	\begin{assumption}
		\label{compact:as2}
		\begin{itemize}
			\item [  ]
			\item [(d)] We suppose that $\lim_{\delta \rightarrow 0} \omega_{\delta}(2 \delta) = 0$.
		\end{itemize}
	\end{assumption}
\end{tcolorbox}

This additional assumption is true if the original transition probability $p(\cdot|x,\boldsymbol a)$ is continuous in total variation norm. 

\begin{theorem}\label{finite_horizon_thm_1}
	Under Assumption~\ref{compact:as1} and Assumption~\ref{compact:as2}, for each $i=1,\ldots,N$, we have 
	$$
	\lim_{\delta \rightarrow 0} \|\hat J_{\delta,t}^*(\boldsymbol \pi^{*,-i}_{\delta}; \cdot)-J_{t}^*(\boldsymbol \pi^{*,-i}_{\delta}; \cdot)\| = 0 \,\, \forall  \, t=0,\ldots,T.
	$$
\end{theorem}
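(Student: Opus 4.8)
The plan is to prove the statement by backward induction on $t$, from $t=T$ down to $t=0$. Throughout, abbreviate $\hat J^*_{\delta,s}:=\hat J^*_{\delta,s}(\boldsymbol\pi^{*,-i}_\delta;\cdot)$ and $J^*_s:=J^*_s(\boldsymbol\pi^{*,-i}_\delta;\cdot)$, and let $M^i:=T\|c^i\|$, which bounds both $\|\hat J^*_{\delta,s}\|$ and $\|J^*_s\|$ uniformly in $\delta$ and $s$ (each is a sum of at most $T$ one-stage costs). The base case $t=T$ is immediate, since $\hat J^*_{\delta,T}=0=J^*_T$. The crucial structural point is that, because the opponents' policies $\boldsymbol\pi^{*,-i}_\delta$ are extended from $\sX_\delta$ to $\sX$ through $Q_\delta$, at any state $z\in\sX$ with bin index $j=i(z)$ (so $z\in\S^j_\delta$ and $Q_\delta(z)=x_j$) \emph{both} recursions \eqref{optimality_eq} and \eqref{optimality_eq_finite_extended} employ the same opponent profile $\gamma^{-i}:=\boldsymbol\pi^{*,-i}_{\delta,t}(x_j)$; consequently only the discrepancy between $\hat T^{\boldsymbol\pi^{*,-i}_\delta}_{\delta,t}$ acting on $\hat J^*_{\delta,t+1}$ and $T^{\boldsymbol\pi^{*,-i}_\delta}_t$ acting on $J^*_{t+1}$ needs to be controlled, and the whole argument is in fact uniform over the choice of the (grid, then $Q_\delta$-extended) profile $\boldsymbol\pi^{*,-i}_\delta$, so its $\delta$-dependence is harmless.

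At a fixed $z$ I would split this discrepancy into three contributions, using that $\hat J^*_{\delta,t+1}$ is constant on each bin — so $\hat J^*_{\delta,t+1}\circ Q_\delta=\hat J^*_{\delta,t+1}$, and hence $\int_\sX \hat J^*_{\delta,t+1}(y)\,p(dy|\cdot)$ depends on the kernel only through $p^\delta=Q_\delta*p$ and is a linear functional of the discrete measure $p^\delta(\cdot|\cdot)$ with coefficients bounded by $M^i$. First, replacing the continuation value $J^*_{t+1}$ by $\hat J^*_{\delta,t+1}$ inside $T^{\boldsymbol\pi^{*,-i}_\delta}_t$ changes the value by at most $\|\hat J^*_{\delta,t+1}-J^*_{t+1}\|$, which is precisely the inductive quantity. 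Second, restricting the minimization over $a^i$ from $\sA^i$ to its $\delta$-net $\sA^i_\delta$: one inequality is free, and for the other I round a near-minimizer $a^*$ of the parent problem to $Q_{i,\delta}(a^*)\in\sA^i_\delta$, at a cost of $\omega_{c^i}(\delta)$ in the cost term — where $\omega_{c^i}$ is the modulus of continuity of $c^i$, uniformly continuous on the compact $\sX\times{\boldsymbol \sA}$ by Assumption~\ref{compact:as1}, with the opponents' distributional argument absorbed by averaging the pointwise bound — plus $M^i\,\omega_\delta(\delta)$ in the transition term (by the linear-functional representation and the very definition of $\omega_\delta$). Third, passing from the value at $z$ to the $\nu^j_\delta$-average over $x\in\S^j_\delta$ built into $\hat T^{\boldsymbol\pi^{*,-i}_\delta}_{\delta,t}$: since $d_\sX(x,z)\le\diam(\S^j_\delta)\le 2\delta$ for every such $x$, and since $|\min f-\min g|\le\|f-g\|_\infty$ for the (finite) minimum over $\sA^i_\delta$, this is bounded by $\omega_{c^i}(2\delta)+M^i\,\omega_\delta(2\delta)$, uniformly in the choice of the measures $\nu_\delta$.

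Collecting the three bounds and using monotonicity of $\omega_{c^i}$ and $\omega_\delta$ in their arguments gives the one-step recursion
\[
\|\hat J^*_{\delta,t}-J^*_t\|\ \le\ \|\hat J^*_{\delta,t+1}-J^*_{t+1}\|+2\,\omega_{c^i}(2\delta)+2M^i\,\omega_\delta(2\delta),
\]
which, unrolled from $t=T$, yields $\|\hat J^*_{\delta,t}-J^*_t\|\le 2(T-t)\bigl(\omega_{c^i}(2\delta)+M^i\,\omega_\delta(2\delta)\bigr)$ for every $t$. Letting $\delta\to 0$ completes the proof: $\omega_{c^i}(2\delta)\to 0$ by uniform continuity of $c^i$ under Assumption~\ref{compact:as1}, and $\omega_\delta(2\delta)\to 0$ is exactly Assumption~\ref{compact:as2}(d). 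The one genuinely delicate ingredient — and the reason Assumption~\ref{compact:as2} is imposed on top of setwise continuity — is the transition term: setwise continuity of $p$ already makes $p^\delta(\cdot\mid\cdot)$ total-variation continuous for each \emph{fixed} $\delta$, but it gives no handle on how quickly the modulus $\omega_\delta$ can deteriorate as the partition is refined; Assumption~\ref{compact:as2}(d) is precisely what prevents this deterioration along the refinement $\delta\to 0$, and with it in hand everything else reduces to routine bookkeeping with moduli of continuity.
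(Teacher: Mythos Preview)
Your proof is correct and follows essentially the same route as the paper: backward induction from $t=T$, with the one-step error split into (i) the continuation-value replacement $J^*_{t+1}\to\hat J^*_{\delta,t+1}$ handled by the inductive hypothesis via non-expansiveness, (ii) the action-space restriction $\sA^i\to\sA^i_\delta$ handled by rounding a near-minimizer through $Q_{i,\delta}$, and (iii) the passage from evaluation at $z$ to the bin average over $\S^{i(z)}_\delta$, each of the latter two controlled by the modulus of $c^i$ on the cost term and by $\omega_\delta$ on the transition term through the $p^\delta$ representation. Your presentation is slightly more quantitative (you unroll the recursion into an explicit $2(T-t)(\omega_{c^i}(2\delta)+M^i\omega_\delta(2\delta))$ bound), but the decomposition and the ingredients are identical to the paper's.
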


\begin{proof}

We prove the result by backward induction. Since the terminal cost at time $T$ is zero for each problem, the result trivially holds. 


Suppose that the statement is true for $t+1$ and consider $t$. Then,
\begin{align*}
	&\|\hat J_{\delta,t}^*(\boldsymbol \pi^{*,-i}_{\delta}; \cdot)-J_{t}^*(\boldsymbol \pi^{*,-i}_{\delta}; \cdot)\| = \|\hat T^{\boldsymbol \pi^{*,-i}_{\delta}}_{\delta,t} \hat J_{\delta,t+1}^*(\boldsymbol \pi^{*,-i}_{\delta}; \cdot)- T^{\boldsymbol \pi^{*,-i}_{\delta}}_{t} J_{t+1}^*(\boldsymbol \pi^{*,-i}_{\delta}; \cdot)\| \\
	&\leq \|\hat T^{\boldsymbol \pi^{*,-i}_{\delta}}_{\delta,t} \hat J_{\delta,t+1}^*(\boldsymbol \pi^{*,-i}_{\delta}; \cdot)- T^{\boldsymbol \pi^{*,-i}_{\delta}}_{t} \hat J_{\delta,t+1}^*(\boldsymbol \pi^{*,-i}_{\delta}; \cdot)\| \\
	&\phantom{xxxxxxxxxxx}+ \|T^{\boldsymbol \pi^{*,-i}_{\delta}}_{t} \hat J_{\delta,t+1}^*(\boldsymbol \pi^{*,-i}_{\delta}; \cdot)- T^{\boldsymbol \pi^{*,-i}_{\delta}}_{t} J_{t+1}^*(\boldsymbol \pi^{*,-i}_{\delta}; \cdot)\|,
\end{align*}
where the second term in the last expression converges to zero as $\delta \rightarrow 0$ by the induction hypothesis as the operator $T^{\boldsymbol \pi^{*,-i}_{\delta}}_{t}$ is non-expansive. Hence, it remains to prove that the first expression converges to zero as $\delta \rightarrow 0$. To this end, we define $K \coloneqq T \, \sup_{i=1,\ldots,N} \|c^i\|$. Then, 

\begin{align*}
	&\|\hat T^{\boldsymbol \pi^{*,-i}_{\delta}}_{\delta,t} \hat J_{\delta,t+1}^*(\boldsymbol \pi^{*,-i}_{\delta}; \cdot)- T^{\boldsymbol \pi^{*,-i}_{\delta}}_{t} \hat J_{\delta,t+1}^*(\boldsymbol \pi^{*,-i}_{\delta}; \cdot)\| \\
	&=\sup_{z \in \sX} \bigg| \min_{a^i \in \sA^i_{\delta}} \left\{ \int_{\S_{\delta}^{i(z)}} \left[ c^i(x,\boldsymbol \pi^{*,-i}_{\delta,t}(z),a^i) + \int_{\sX} \hat J_{\delta,t+1}^*(\boldsymbol \pi^{*,-i}_{\delta}; y) \, p(dy|x,\boldsymbol \pi^{*,-i}_{\delta,t}(z),a^i) \right] \, \nu_{\delta}^{i(z)}(dx) \right\} \\
	&\phantom{xxxxxxxxx} - \min_{a^i \in \sA^i} \left[ c^i(z,\boldsymbol \pi^{*,-i}_{\delta,t}(z),a^i) + \int_{\sX} \hat J_{\delta,t+1}^*(\boldsymbol \pi^{*,-i}_{\delta}; y) \, p(dy|z,\boldsymbol \pi^{*,-i}_{\delta,t}(z),a^i) \right] \bigg| \\
	&\leq \sup_{z \in \sX} \bigg| \min_{a^i \in \sA^i_{\delta}} \left\{ \int_{\S_{\delta}^{i(z)}} \left[ c^i(x,\boldsymbol \pi^{*,-i}_{\delta,t}(z),a^i) + \int_{\sX} \hat J_{\delta,t+1}^*(\boldsymbol \pi^{*,-i}_{\delta}; y) \, p(dy|x,\boldsymbol \pi^{*,-i}_{\delta,t}(z),a^i) \right] \, \nu_{\delta}^{i(z)}(dx) \right\} \\
	&\phantom{xxxx} - \min_{a^i \in \sA^i} \left\{ \int_{\S_{\delta}^{i(z)}} \left[ c^i(x,\boldsymbol \pi^{*,-i}_{\delta,t}(z),a^i) + \int_{\sX} \hat J_{\delta,t+1}^*(\boldsymbol \pi^{*,-i}_{\delta}; y) \, p(dy|x,\boldsymbol \pi^{*,-i}_{\delta,t}(z),a^i) \right] \, \nu_{\delta}^{i(z)}(dx) \right\} \bigg| \\
	&\phantom{x}+ \sup_{z \in \sX} \bigg| \min_{a^i \in \sA^i} \left\{ \int_{\S_{\delta}^{i(z)}} \left[ c^i(x,\boldsymbol \pi^{*,-i}_{\delta,t}(z),a^i) + \int_{\sX} \hat J_{\delta,t+1}^*(\boldsymbol \pi^{*,-i}_{\delta}; y) \, p(dy|x,\boldsymbol \pi^{*,-i}_{\delta,t}(z),a^i) \right] \, \nu_{\delta}^{i(z)}(dx) \right\} \\
	&\phantom{xxxxxxxxx} - \min_{a^i \in \sA^i} \left[ c^i(z,\boldsymbol \pi^{*,-i}_{\delta,t}(z),a^i) + \int_{\sX} \hat J_{\delta,t+1}^*(\boldsymbol \pi^{*,-i}_{\delta}; y) \, p(dy|z,\boldsymbol \pi^{*,-i}_{\delta,t}(z),a^i) \right] \bigg| \\
	&\leq \sup_{z \in \sX} \sup_{a^i \in \sA^i} \bigg|  \int_{\S_{\delta}^{i(z)}} \left[ c^i(x,\boldsymbol \pi^{*,-i}_{\delta,t}(z),Q_{i,\delta}(a^i)) \right. \\ & \left. \phantom{xxxxxxxxx} + \int_{\sX} \hat J_{\delta,t+1}^*(\boldsymbol \pi^{*,-i}_{\delta}; y) \, p(dy|x,\boldsymbol \pi^{*,-i}_{\delta,t}(z),Q_{i,\delta}(a^i)) \right] \, \nu_{\delta}^{i(z)}(dx)  \\
	&\phantom{xxxxxxxxx} -  \int_{\S_{\delta}^{i(z)}} \left[ c^i(x,\boldsymbol \pi^{*,-i}_{\delta,t}(z),a^i) + \int_{\sX} \hat J_{\delta,t+1}^*(\boldsymbol \pi^{*,-i}_{\delta}; y) \, p(dy|x,\boldsymbol \pi^{*,-i}_{\delta,t}(z),a^i) \right] \, \nu_{\delta}^{i(z)}(dx) \bigg| \\
	&\phantom{x}+ \sup_{z \in \sX} \sup_{a^i \in \sA^i} \bigg| \int_{\S_{\delta}^{i(z)}} c^i(x,\boldsymbol \pi^{*,-i}_{\delta,t}(z),a^i) \, \nu_{\delta}^{i(z)}(dx) - c^i(z,\boldsymbol \pi^{*,-i}_{\delta,t}(z),a^i) \bigg| \\
	&\phantom{x}+ \sup_{z \in \sX} \sup_{a^i \in \sA^i} \bigg| \int_{\S_{\delta}^{i(z)}} \int_{\sX} \hat J_{\delta,t+1}^*(\boldsymbol \pi^{*,-i}_{\delta}; y) \, p(dy|x,\boldsymbol \pi^{*,-i}_{\delta,t}(z),a^i) \, \nu_{\delta}^{i(z)}(dx) \\
	&\phantom{xxxxxxxxxxxxxxxxxxxxxxxxxxxxxxxxxx}-  \int_{\sX} \hat J_{\delta,t+1}^*(\boldsymbol \pi^{*,-i}_{\delta}; y) \, p(dy|z,\boldsymbol \pi^{*,-i}_{\delta,t}(z),a^i)  \bigg|. 
\end{align*}

\noindent In the last expression, the second term converges to zero as $\delta \rightarrow 0$ by uniform continuity of the function $c^i(x,\boldsymbol a)$. The last term  can be written as  
\begin{align*}
	&\sup_{z \in \sX} \sup_{a^i \in \sA^i} \bigg| \int_{\S_{\delta}^{i(z)}} \sum_{ y \in \sX} J_{\delta,t+1}^*(\boldsymbol \pi^{*,-i}_{\delta}; y) \, p^{\delta}(y|x,\boldsymbol \pi^{*,-i}_{\delta,t}(z),a^i) \, \nu_{\delta}^{i(z)}(dx) \\
	&\phantom{xxxxxxxxxxxxxxxxxxxxxxxx}-  \sum_{y \in \sX} J_{\delta,t+1}^*(\boldsymbol \pi^{*,-i}_{\delta}; y) \, p^{\delta}(y|z,\boldsymbol \pi^{*,-i}_{\delta,t}(z),a^i)  \bigg|, 
\end{align*}
and so, it can be upper bounded by $K \, \omega_{\delta}(2 \, \delta)$ as $\sup_{j=1,\ldots,k_{\delta}}\diam(\S_{\delta}^j) \leq 2 \, \delta$, which converges to zero as $\delta \rightarrow 0$ by Assumption~\ref{compact:as2}.
In the last expression, the first term can be upper bounded by 

\begin{align*}
	&\sup_{z \in \sX} \sup_{a^i \in \sA^i} \bigg|  c^i(z,\boldsymbol \pi^{*,-i}_{\delta,t}(z),Q_{i,\delta}(a^i)) + \int_{\sX} \hat J_{\delta,t+1}^*(\boldsymbol \pi^{*,-i}_{\delta}; y) \, p(dy|z,\boldsymbol \pi^{*,-i}_{\delta,t}(z),Q_{i,\delta}(a^i))   \\
	&\phantom{xxxxxxxxxxxxx} -  c^i(z,\boldsymbol \pi^{*,-i}_{\delta,t}(z),a^i) - \int_{\sX} \hat J_{\delta,t+1}^*(\boldsymbol \pi^{*,-i}_{\delta}; y) \, p(dy|z,\boldsymbol \pi^{*,-i}_{\delta,t}(z),a^i)  \bigg| \\ 
	&\leq \sup_{z \in \sX} \sup_{a^i \in \sA^i} \bigg|  c^i(z,\boldsymbol \pi^{*,-i}_{\delta,t}(z),Q_{i,\delta}(a^i)) - c^i(z,\boldsymbol \pi^{*,-i}_{\delta,t}(z),a^i) \bigg| \\
	&+ \sup_{z \in \sX} \sup_{a^i \in \sA^i} \bigg| \int_{\sX} \hat J_{\delta,t+1}^*(\boldsymbol \pi^{*,-i}_{\delta}; y) \, p(dy|z,\boldsymbol \pi^{*,-i}_{\delta,t}(z),Q_{i,\delta}(a^i)) \\
	&\phantom{xxxxxxxxxxxxx} - \int_{\sX} \hat J_{\delta,t+1}^*(\boldsymbol \pi^{*,-i}_{\delta}; y) \, p(dy|z,\boldsymbol \pi^{*,-i}_{\delta,t}(z),a^i)  \bigg| \\
	&= \sup_{z \in \sX} \sup_{a^i \in \sA^i} \bigg|  c^i(z,\boldsymbol \pi^{*,-i}_{\delta,t}(z),Q_{i,\delta}(a^i)) - c^i(z,\boldsymbol \pi^{*,-i}_{\delta,t}(z),a^i) \bigg| \\
	&+ \sup_{z \in \sX} \sup_{a^i \in \sA^i} \bigg| \sum_{ y \in \sX}  J_{\delta,t+1}^*(\boldsymbol \pi^{*,-i}_{\delta}; y) \, p^{\delta}(y|z,\boldsymbol \pi^{*,-i}_{\delta,t}(z),Q_{i,\delta}(a^i)) \\
	&\phantom{xxxxxxxxxxxxx} - \sum_{y \in \sX}  J_{\delta,t+1}^*(\boldsymbol \pi^{*,-i}_{\delta}; y) \, p^{\delta}(y|z,\boldsymbol \pi^{*,-i}_{\delta,t}(z),a^i)  \bigg| 
\end{align*}
The first term above converges to zero as $\delta \rightarrow 0$ again by uniform continuity of the function $c^i(x,\boldsymbol a)$ and the second term can be upper bounded by $K \, \omega_{\delta}(2 \, \delta)$ as $\sup_{a^i \in \sA^i} d_{\sA^i}(Q_{i,\delta}(a^i),a^i) \leq \delta$, which converges to zero as $\delta \rightarrow 0$ by Assumption~\ref{compact:as2}. This completes the proof. 
\end{proof}

We now proceed to prove the key result of this section, which implies that the Markov perfect Nash equilibrium of the finite model, when extended to the original model, serves as an approximate Markov perfect equilibrium for the original game.

\begin{theorem}\label{finite_horizon_thm_2}
	Under Assumption~\ref{compact:as1} and Assumption~\ref{compact:as2}, for each $i=1,\ldots,N$, we have 
	$$
	\lim_{\delta \rightarrow 0} \|J_{t}^i(\boldsymbol \pi^{*}_{\delta}, \cdot)-J_{t}^*(\boldsymbol \pi^{*,-i}_{\delta}; \cdot)\| = 0 \,\, \forall  \, t=0,\ldots,T, 
	$$
	where $J_{t}^i(\boldsymbol \pi^{*}_{\delta}, \cdot)$ is the cost-to-go of player~$i$ at time $t$ under the joint policy $\boldsymbol \pi^{*}_{\delta}$. 
\end{theorem}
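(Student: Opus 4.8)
The plan is to combine Theorem~\ref{finite_horizon_thm_1} with a second backward induction. By the triangle inequality,
$$\|J_{t}^i(\boldsymbol \pi^{*}_{\delta}, \cdot)-J_{t}^*(\boldsymbol \pi^{*,-i}_{\delta}; \cdot)\| \le \|J_{t}^i(\boldsymbol \pi^{*}_{\delta}, \cdot)-\hat J_{\delta,t}^*(\boldsymbol \pi^{*,-i}_{\delta}; \cdot)\| + \|\hat J_{\delta,t}^*(\boldsymbol \pi^{*,-i}_{\delta}; \cdot)-J_{t}^*(\boldsymbol \pi^{*,-i}_{\delta}; \cdot)\|,$$
and the second term tends to $0$ by Theorem~\ref{finite_horizon_thm_1}, so it suffices to control the first term. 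The key observation is that, since $\boldsymbol \pi^*_\delta$ is a Markov perfect Nash equilibrium of the finite model, player~$i$'s component $\pi^{*,i}_\delta$ is a best response to $\boldsymbol \pi^{*,-i}_\delta$ there; hence the finite-model optimal cost-to-go $J_{\delta,t}^*(\boldsymbol \pi^{*,-i}_{\delta};\cdot)$ coincides with the finite-model cost-to-go of player~$i$ under the joint policy $\boldsymbol \pi^*_\delta$, and so $\hat J_{\delta,t}^*(\boldsymbol \pi^{*,-i}_{\delta};\cdot)$ equals the piece-wise constant extension of that cost-to-go, which I denote $\hat J_{\delta,t}^i(\boldsymbol \pi^*_\delta;\cdot)$. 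Thus the task reduces to showing
$$\lim_{\delta\to 0}\|J_{t}^i(\boldsymbol \pi^{*}_{\delta}, \cdot)-\hat J_{\delta,t}^i(\boldsymbol \pi^*_\delta;\cdot)\| = 0,\qquad t=0,\dots,T,$$
i.e. that running the extended finite-model equilibrium joint policy in the original game produces essentially the same cost-to-go as in the finite model.

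\textbf{Recursions.} For a fixed joint Markov policy $\boldsymbol \pi$, introduce the non-minimizing operator $\tilde T^{\boldsymbol \pi}_t J(x) \coloneqq c^i(x,\boldsymbol \pi_t(x)) + \int_{\sX} J(y)\,p(dy|x,\boldsymbol \pi_t(x))$ on $B(\sX)$, so that $J_{t}^i(\boldsymbol \pi^{*}_{\delta},\cdot) = \tilde T^{\boldsymbol \pi^*_\delta}_t J_{t+1}^i(\boldsymbol \pi^{*}_{\delta},\cdot)$ with $J_T^i(\boldsymbol \pi^*_\delta,\cdot)=0$, where throughout $\boldsymbol \pi^*_{\delta,t}$ denotes the extended policy $\boldsymbol \pi^*_{\delta,t}\circ Q_\delta$. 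On the finite side, substituting the definitions of $c_\delta^i$ and $p_\delta$ into the finite dynamic-programming recursion and extending piece-wise constantly (exactly as in the derivation of (\ref{optimality_eq_finite_extended})) gives $\hat J_{\delta,t}^i(\boldsymbol \pi^*_\delta;\cdot) = \tilde T^{\boldsymbol \pi^*_\delta}_{\delta,t}\hat J_{\delta,t+1}^i(\boldsymbol \pi^*_\delta;\cdot)$, $\hat J_{\delta,T}^i = 0$, where $\tilde T^{\boldsymbol \pi^*_\delta}_{\delta,t}$ is the non-minimizing analogue of $\hat T^{\boldsymbol \pi^{*,-i}_\delta}_{\delta,t}$, namely $\tilde T^{\boldsymbol \pi^*_\delta}_{\delta,t}J(z) \coloneqq \int_{\S_\delta^{i(z)}}\bigl[c^i(x,\boldsymbol \pi^*_{\delta,t}(z)) + \int_{\sX} \hat J(y)\,p(dy|x,\boldsymbol \pi^*_{\delta,t}(z))\bigr]\nu_\delta^{i(z)}(dx)$ with $\hat J = J\circ Q_\delta$.

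\textbf{Backward induction.} The case $t=T$ is trivial. Assuming the claim at $t+1$, write $\|J_{t}^i(\boldsymbol \pi^{*}_{\delta},\cdot) - \hat J_{\delta,t}^i(\boldsymbol \pi^*_\delta;\cdot)\| = \|\tilde T^{\boldsymbol \pi^*_\delta}_t J_{t+1}^i(\boldsymbol \pi^*_\delta,\cdot) - \tilde T^{\boldsymbol \pi^*_\delta}_{\delta,t}\hat J_{\delta,t+1}^i(\boldsymbol \pi^*_\delta;\cdot)\|$ and insert $\pm\,\tilde T^{\boldsymbol \pi^*_\delta}_t \hat J_{\delta,t+1}^i(\boldsymbol \pi^*_\delta;\cdot)$. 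Since $\tilde T^{\boldsymbol \pi^*_\delta}_t$ differs from averaging against a probability kernel only by a $J$-independent additive term, it is non-expansive, so the first piece is $\le \|J_{t+1}^i(\boldsymbol \pi^*_\delta,\cdot) - \hat J_{\delta,t+1}^i(\boldsymbol \pi^*_\delta;\cdot)\| \to 0$ by the induction hypothesis. For the second piece, $\|\tilde T^{\boldsymbol \pi^*_\delta}_t \hat J_{\delta,t+1}^i(\boldsymbol \pi^*_\delta;\cdot) - \tilde T^{\boldsymbol \pi^*_\delta}_{\delta,t}\hat J_{\delta,t+1}^i(\boldsymbol \pi^*_\delta;\cdot)\|$, observe that this is a (strictly simpler) instance of the mismatch term already estimated in the proof of Theorem~\ref{finite_horizon_thm_1}: both operators use the same finite-valued policy $\boldsymbol \pi^*_{\delta,t}(z)$, so no action quantization enters, and the only discrepancy is that $\tilde T^{\boldsymbol \pi^*_\delta}_{\delta,t}$ averages $c^i$ and $p$ over $x\in\S_\delta^{i(z)}$ while $\tilde T^{\boldsymbol \pi^*_\delta}_t$ evaluates them at $z$. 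Splitting into a cost term and a transition term: the cost term is at most $\sup_{z}\sup_{x\in\S_\delta^{i(z)}}\sup_{\boldsymbol a}|c^i(x,\boldsymbol a)-c^i(z,\boldsymbol a)|\to 0$ by uniform continuity of $c^i$ on the compact $\sX\times\boldsymbol \sA$ (using $\diam(\S_\delta^{i(z)})\le 2\delta$); and, writing $\hat J_{\delta,t+1}^i = J_{\delta,t+1}^i\circ Q_\delta$ so that $\int_{\sX}\hat J_{\delta,t+1}^i(y)\,p(dy|\cdot,\cdot)=\sum_{y\in\sX_\delta}J_{\delta,t+1}^i(y)\,p^\delta(y|\cdot,\cdot)$, the transition term is at most $\|J_{\delta,t+1}^i\|\,\omega_\delta(2\delta)\le K\,\omega_\delta(2\delta)\to 0$ by Assumption~\ref{compact:as2}(d), with $K\coloneqq T\sup_i\|c^i\|$ — both bounds exactly as in the proof of Theorem~\ref{finite_horizon_thm_1}, using convexity of the total variation norm (and of the average against $\nu_\delta^{i(z)}$) to pass the modulus $\omega_\delta$ through the randomization present in $\boldsymbol \pi^*_{\delta,t}(z)$. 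Hence the second piece vanishes, the induction closes, and combining with the triangle inequality above yields the theorem.

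\textbf{Main obstacle.} The analytic content — the two mismatch estimates — is an easier special case of what was already carried out for Theorem~\ref{finite_horizon_thm_1}, so the real care is in the bookkeeping: (i) identifying $\hat J_{\delta,t}^*(\boldsymbol \pi^{*,-i}_{\delta};\cdot)$ with the extended cost-to-go of player~$i$ under the equilibrium joint policy $\boldsymbol \pi^*_\delta$, which relies on the equilibrium policies achieving the minimum in (\ref{optimality_eq_finite}) at every stage (this is precisely how a finite-horizon Markov perfect equilibrium is constructed); (ii) checking that the piece-wise constant extension commutes with the recursion, so that $\hat J_{\delta,t}^i(\boldsymbol \pi^*_\delta;\cdot)$ indeed solves $\hat J_{\delta,t}^i = \tilde T^{\boldsymbol \pi^*_\delta}_{\delta,t}\hat J_{\delta,t+1}^i$; and (iii) accounting for the fact that the equilibrium policies may be randomized, so that every continuity modulus must be passed through an expectation over actions, which is harmless by convexity of the relevant norms.
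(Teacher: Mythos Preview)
Your proposal is correct and follows essentially the same approach as the paper's proof: both exploit that $\pi^{*,i}_{\delta,t}$ achieves the minimum in (\ref{optimality_eq_finite}) to replace the minimizing operator by the evaluated one, and then reduce to the same mismatch estimates (uniform continuity of $c^i$ plus the $K\,\omega_\delta(2\delta)$ bound) already carried out in Theorem~\ref{finite_horizon_thm_1}. The only organizational difference is that the paper inducts directly on $\|J_t^i(\boldsymbol\pi^*_\delta,\cdot)-J_t^*(\boldsymbol\pi^{*,-i}_\delta;\cdot)\|$ and invokes Theorem~\ref{finite_horizon_thm_1} at each inductive step, whereas you first split off the Theorem~\ref{finite_horizon_thm_1} term and then induct on the cleaner quantity $\|J_t^i(\boldsymbol\pi^*_\delta,\cdot)-\hat J_{\delta,t}^i(\boldsymbol\pi^*_\delta;\cdot)\|$; the analytic content is identical.
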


\begin{proof}

We again prove the result by backward induction. Since the terminal cost at time $T$ is zero, the result trivially holds. 

Suppose that the statement is true for $t+1$ and consider $t$. Then, 
\begin{align}\label{thm_2_eq_1}
	&\|J_{t}^i(\boldsymbol \pi^{*}_{\delta}, \cdot)-J_{t}^*(\boldsymbol \pi^{*,-i}_{\delta}; \cdot)\| = \| T^{\boldsymbol \pi^{*}_{\delta},i}_{t}  J_{t+1}^i(\boldsymbol \pi^{*}_{\delta}, \cdot)- T^{\boldsymbol \pi^{*,-i}_{\delta}}_{t} J_{t+1}^*(\boldsymbol \pi^{*,-i}_{\delta}; \cdot)\|, 
\end{align}
where 
$$
T^{\boldsymbol \pi^{*}_{\delta},i}_{t} J(z) \coloneqq c^i(z,\boldsymbol \pi^{*}_{\delta,t}(z)) + \int_{\sX} J(y) \, p(dy|z,\boldsymbol \pi^{*}_{\delta,t}(z)). 
$$
Then, we can bound (\ref{thm_2_eq_1}) via triangle inequality as follows
\begin{align}
		\nonumber
	(\ref{thm_2_eq_1}) \leq & \| T^{\boldsymbol \pi^{*}_{\delta},i}_{t}  J_{t+1}^i(\boldsymbol \pi^{*}_{\delta}, \cdot)- \hat T^{\boldsymbol \pi^{*,-i}_{\delta}}_{\delta,t} \hat J_{\delta,t+1}^*(\boldsymbol \pi^{*,-i}_{\delta}; \cdot)\| \\ &+ \|\hat T^{\boldsymbol \pi^{*,-i}_{\delta}}_{\delta,t} \hat J_{\delta,t+1}^*(\boldsymbol \pi^{*,-i}_{\delta}; \cdot)-T^{\boldsymbol \pi^{*,-i}_{\delta}}_{t} J_{t+1}^*(\boldsymbol \pi^{*,-i}_{\delta}; \cdot)\|
\label{thm_2_eq_2}
\end{align}
The second term in the last expression converges to zero as $\delta \rightarrow 0$ by Theorem~\ref{finite_horizon_thm_1}. For the first term, the minimum is achieved in $\hat T^{\boldsymbol \pi^{*,-i}_{\delta}}_{\delta,t} \hat J_{\delta,t+1}^*(\boldsymbol \pi^{*,-i}_{\delta}; \cdot)$ by $\pi_{\delta,t}^{*,i}$ as $\boldsymbol \pi^{*}_{\delta}$ is a Nash equilibrium in the finite game. Hence, we can write 

\begin{align*}
	&\hat T^{\boldsymbol \pi^{*,-i}_{\delta}}_{\delta,t} \hat J_{\delta,t+1}^*(\boldsymbol \pi^{*,-i}_{\delta}; \cdot) \\
	&\phantom{xxxx}= \min_{a^i \in \sA^i_{\delta}} \left\{ \int_{\S_{\delta}^{i(z)}} \left[ c^i(x,\boldsymbol \pi^{*,-i}_{\delta,t}(z),a^i) + \int_{\sX} \hat J_{\delta,t+1}^*(\boldsymbol \pi^{*,-i}_{\delta}; y) \, p(dy|x,\boldsymbol \pi^{*,-i}_{\delta,t}(z),a^i) \right] \, \nu_{\delta}^{i(z)}(dx) \right\} \\
	&\phantom{xxxx}= \int_{\S_{\delta}^{i(z)}} \left[ c^i(x,\boldsymbol \pi^{*}_{\delta,t}(z)) + \int_{\sX} \hat J(y) \, p(dy|x,\boldsymbol \pi^{*}_{\delta,t}(z)) \right] \, \nu_{\delta}^{i(z)}(dx).  
\end{align*}

Hence, the first term in (\ref{thm_2_eq_2}) can be written as 
\begin{align*}
	&\sup_{z \in \sX} \bigg| \int_{\S_{\delta}^{i(z)}} \left[ c^i(x,\boldsymbol \pi^{*}_{\delta,t}(z)) + \int_{\sX} \hat J_{\delta,t+1}^*(\boldsymbol \pi^{*,-i}_{\delta}; y) \, p(dy|x,\boldsymbol \pi^{*}_{\delta,t}(z)) \right] \, \nu_{\delta}^{i(z)}(dx) \\
	&\phantom{xxxxxxxxxxxxxxxxxxxx}- \left[ c^i(z,\boldsymbol \pi^{*}_{\delta,t}(z)) + \int_{\sX} J_{t+1}^i(\boldsymbol \pi^{*}_{\delta}, y) \, p(dy|z,\boldsymbol \pi^{*}_{\delta,t}(z)) \right] \bigg|.
\end{align*}
Using exactly the same arguments that we used in the proof of Theorem~\ref{finite_horizon_thm_1}, we can establish that this term converges to zero as $\delta \rightarrow 0$. This completes the proof.  
\end{proof}

Note that for each $i=1,\ldots,N$, we have
\begin{align*}
	J_{0}^*(\boldsymbol \pi^{*,-i}_{\delta}; x) = \inf_{\pi \in \Pi^i} \cE^{(\boldsymbol \pi^{*,-i}_{\delta},\pi^i)}_x \left[ \sum_{t=0}^{T-1} c^i(x_t,\boldsymbol a_t) \right].  
\end{align*}
Hence, Theorem~\ref{finite_horizon_thm_2} implies that for any $\varepsilon > 0$, there exists $\delta(\varepsilon)$ such that for any $\delta \leq \delta(\varepsilon)$, the policy $\boldsymbol \pi^{*}_{\delta}$ is Markov perfect $\varepsilon$-Nash equilibrium. Moreover, since Theorem~\ref{finite_horizon_thm_2} is true for other $t$ values greater than zero, we can conclude that Markov perfect $\varepsilon$-Nash equilibrium $\boldsymbol \pi^{*}_{\delta}$ is also subgame perfect $\varepsilon$-Nash equilibrium.

\section{Discounted Cost}\label{discounted}

In this section, we address the approximation problem related to the discounted cost criterion. We assume that Assumption~\ref{compact:as1} remains valid throughout this section. Initially, we present the best response mappings for both the original and approximate models. Following this, we outline the approximation result. However, prior to this, we again need to define the subgame perfect Nash equilibrium for the discounted cost criterion.

\begin{definition}[Subgame perfect Nash equilibrium]
	A joint policy $\boldsymbol \pi^{*}$ is said to be subgame perfect $\varepsilon$-Nash equilibrium ($\varepsilon \geq 0$) if, for each $i=1,\ldots,N$, we have
	$$
	\cE^{\boldsymbol \pi^{*}}\biggl[\sum_{k=t}^{\infty} \beta^k c^i(x_{k},{\boldsymbol a}_{k})\, \bigg| \, h_t \, \biggr] \leq \inf_{\pi^i \in \Pi^i} \cE^{(\boldsymbol \pi^{*,-i},\pi^i)}\biggl[\sum_{k=t}^{\infty} \beta^k c^i(x_{k},{\boldsymbol a}_{k})\, \bigg| \, h_t \, \biggr] + \varepsilon,
	$$
	for all $h_t \in \sH_t$ and $t=0,\ldots$, where in the expectations starting at time $t$, policies prior to time $t$ are considered irrelevant, while other policies that utilize information preceding time $t$ rely on a fixed historical variable, denoted as $h_t$. If $\varepsilon = 0$, it is called subgame perfect Nash equilibrium. 
\end{definition}

Focusing solely on stationary policies in the infimum on the right side of the expression in the aforementioned definition is adequate. Consequently, conditioning on the latest state $x_t$ in the history variable $h_t$ suffices on the right-hand side. 

\subsubsection*{Best Response Mapping} Given some fixed stationary policies $\boldsymbol \pi^{-i}$ of all players except player~$i$, the player~$i$ best response is characterized via dynamic programming principle: 
\begin{align}\label{optimality_eq_discounted}
	T^{\boldsymbol \pi^{-i}} J^*(\boldsymbol \pi^{-i}; \cdot) = J^*(\boldsymbol \pi^{-i}; \cdot), 
\end{align}
where the operator $T^{\boldsymbol \pi^{-i}}: B(\sX) \rightarrow B(\sX)$ is defined as 
$$
T^{\boldsymbol \pi^{-i}} J(x) \coloneqq \min_{a^i \in \sA^i} \left[ c^i(x,\boldsymbol \pi^{-i}(x),a^i) + \beta \, \int_{\sX} J(y) \, p(dy|x,\boldsymbol \pi^{-i}(x),a^i) \right].
$$
Here, similar to the operators defined for the finite-horizon cost criterion, one can always perform the minimization over the set of probability measures on the action spaces, i.e. $\P(\sA^i)$, which yields the same operator. However, to avoid complicating the notation, we omit explicitly stating this each time we define such an operator, but we implicitly assume its ability to be defined and minimized over the set of probability measures as well.

It is straightforward to prove that $T^{\boldsymbol \pi^{-i}}$ is $\beta$-contraction on $B(\sX)$ with respect to sup-norm. Hence, it has a unique fixed point by Banach fixed point theorem and this unique fixed point $J^*(\boldsymbol \pi^{-i}; \cdot)$ is the optimal cost of the player~$i$, if the policies of other players are fixed as $\boldsymbol \pi^{-i}$: 
\begin{align}\label{optimal_cost_discounted}
	J^*(\boldsymbol \pi^{-i}; x) \coloneqq \inf_{\pi^i \in \Pi^i} \cE^{(\boldsymbol \pi^{-i},\pi^i)}_x \left[ \sum_{t=0}^{\infty} \beta^t c^i(x_t,\boldsymbol a_t) \right].
\end{align}
If the measurable function $\pi^{*,i}(\cdot;\boldsymbol \pi^{-i})$ from $\sX$ to $\P(\sA^i)$ minimizes the expression in (\ref{optimality_eq_discounted}) for all $x \in \sX$, then it is known that the stationary policy $\pi^{*,i}(\cdot;\boldsymbol \pi^{-i})$ is the optimal solution of the optimization problem in (\ref{optimal_cost_discounted}). Hence, we can define the best response of player~$i$ to the joint policy $\boldsymbol \pi^{-i}$ as 
$$
{\Best}_{i}(\boldsymbol \pi^{-i}) = \left\{ \pi^{*,i}(\cdot;\boldsymbol \pi^{-i}): \text{$\pi^{*,i}(\cdot;\boldsymbol \pi^{-i})$ minimizes (\ref{optimality_eq_discounted}) for all $x \in \sX$} \right\}.  
$$
Using this, we define the best response map of all players as follows:
$$
\Best: \prod_{i=1}^N \Phi^i \ni \boldsymbol \pi \mapsto  \prod_{i=1}^N {\Best}_{i}(\boldsymbol \pi^{-i}) \in 2^{^{\prod_{i=1}^N \Phi^i}}. 
$$ 
Therefore, a joint policy $\boldsymbol \pi^*$ is stationary perfect Nash equilibrium if $\boldsymbol \pi^* \in \Best(\boldsymbol \pi^*)$.  

For the approximate finite model, similar definitions can be made if we replace 
$\bigl( \sX, \sA^1,\ldots,$  $\sA^N, c^1,\ldots, c^N, p \bigr)$ with $\bigl( \sX_{\delta}, \sA^1_{\delta},\ldots, \sA^N_{\delta}, c^1_{\delta},\ldots, c^N_{\delta}, p_{\delta} \bigr)$ and integral with summation. In this case, we also add $\delta$ as a subscript to the operators and the optimal cost function. 

\subsubsection*{Existence of Approximate Stationary Perfect Nash Equilibrium}

For any $\delta > 0$, by \cite{Fin64}, it is known that there exists a stationary perfect Nash equilibrium $\boldsymbol \pi^*_{\delta}$ for the finite $\delta$-approximation of the original game problem. Hence, for all $i=1,\ldots,N$, we have 
\begin{align}\label{optimality_eq_finite_discounted}
	T^{\boldsymbol \pi^{*,-i}_{\delta}}_{\delta} J_{\delta}^*(\boldsymbol \pi^{*,-i}_{\delta}; \cdot) = J_{\delta}^*(\boldsymbol \pi^{*,-i}_{\delta}; \cdot), 
\end{align}
where the operator $T^{\boldsymbol \pi^{*,-i}_{\delta}}_{\delta}: B(\sX_{\delta}) \rightarrow B(\sX_{\delta})$ is defined as 

\small
$$
T^{\boldsymbol \pi^{*,-i}_{\delta}}_{\delta} J(x_j) \coloneqq \min_{a^i \in \sA^i_{\delta}} \left\{ \int_{\S_{\delta}^j} \left[ c^i(x,\boldsymbol \pi^{*,-i}_{\delta}(x_j),a^i) + \beta \, \int_{\sX} \hat J(y) \, p(dy|x,\boldsymbol \pi^{*,-i}_{\delta}(x_j),a^i) \right] \right\}, 
$$
\normalsize
where $\hat J = J \circ Q_{\delta}$. For each $i=1,\ldots,N$, the minimum in (\ref{optimality_eq_finite_discounted}) is achieved by the policies in stationary perfect Nash equilibrium $\boldsymbol \pi^*_{\delta}$. 

We now extend the definition of the operator $T^{\boldsymbol \pi^{*,-i}_{\delta}}_{\delta}$ to $B(\sX)$ as follows:
\begin{align*}
	\hat T^{\boldsymbol \pi^{*,-i}_{\delta}}_{\delta} J(z)  \coloneqq \min_{a^i \in \sA^i_{\delta}} \left\{ \int_{\S_{\delta}^{i(z)}} \left[ c^i(x,\boldsymbol \pi^{*,-i}_{\delta}(z),a^i) + \beta \, \int_{\sX} \hat J(y) \, p(dy|x,\boldsymbol \pi^{*,-i}_{\delta}(z),a^i) \right] \, \nu_{\delta}^{i(z)}(dx) \right\}, 
\end{align*}
where $\hat J = J \circ Q_{\delta}$, and with an abuse of notation, we denote the extended policy $\boldsymbol \pi^{*,-i}_{\delta} \circ Q_{\delta}$ as $\boldsymbol \pi^{*,-i}_{\delta}$ in order not to complicate the notation further. Recall that $i: \sX \rightarrow \{1,\ldots,k_{\delta}\}$ gives the index of the bin to which $z$ belongs. One can prove that  
\begin{align}\label{optimality_eq_finite_extended_discounted}
	\hat T^{\boldsymbol \pi^{*,-i}_{\delta}}_{\delta} \hat J_{\delta}^*(\boldsymbol \pi^{*,-i}_{\delta}; \cdot) = \hat J_{\delta}^*(\boldsymbol \pi^{*,-i}_{\delta}; \cdot), 
\end{align}
where we recall that $"$$\,\,\widehat{ }\,\,\,$$"$ means piece-wise constant extensions of functions defined on $\sX_{\delta}$ to $\sX$. To prove the next result, we again need to suppose that Assumption~\ref{compact:as2} holds. 

\begin{theorem}\label{discounted_thm_1}
	Under Assumption~\ref{compact:as1} and Assumption~\ref{compact:as2}, for each $i=1,\ldots,N$, we have 
	$$
	\lim_{\delta \rightarrow 0} \|\hat J_{\delta}^*(\boldsymbol \pi^{*,-i}_{\delta}; \cdot)-J^*(\boldsymbol \pi^{*,-i}_{\delta}; \cdot)\| = 0.
	$$
\end{theorem}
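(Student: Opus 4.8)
The plan is to replace the backward-induction argument of Theorem~\ref{finite_horizon_thm_1}, which is unavailable in the infinite-horizon setting, by a two-fixed-point contraction estimate, and then to reuse almost verbatim the per-state operator bound developed there. Recall from (\ref{optimality_eq_finite_extended_discounted}) that $J_{\delta} \coloneqq \hat J_{\delta}^*(\boldsymbol \pi^{*,-i}_{\delta}; \cdot)$ is the fixed point on $B(\sX)$ of the extended operator $\hat T^{\boldsymbol \pi^{*,-i}_{\delta}}_{\delta}$, while $J^*(\boldsymbol \pi^{*,-i}_{\delta}; \cdot)$ is the fixed point on $B(\sX)$ of $T^{\boldsymbol \pi^{*,-i}_{\delta}}$, both computed with the piecewise-constant extended policy $\boldsymbol \pi^{*,-i}_{\delta} \circ Q_{\delta}$. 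Since both operators are $\beta$-contractions in the sup-norm, the elementary estimate
\begin{align*}
\|J_{\delta} - J^*(\boldsymbol \pi^{*,-i}_{\delta}; \cdot)\| &= \|\hat T^{\boldsymbol \pi^{*,-i}_{\delta}}_{\delta} J_{\delta} - T^{\boldsymbol \pi^{*,-i}_{\delta}} J^*(\boldsymbol \pi^{*,-i}_{\delta}; \cdot)\| \\
&\le \|\hat T^{\boldsymbol \pi^{*,-i}_{\delta}}_{\delta} J_{\delta} - T^{\boldsymbol \pi^{*,-i}_{\delta}} J_{\delta}\| + \beta \|J_{\delta} - J^*(\boldsymbol \pi^{*,-i}_{\delta}; \cdot)\|
\end{align*}
yields $\|J_{\delta} - J^*(\boldsymbol \pi^{*,-i}_{\delta}; \cdot)\| \le \frac{1}{1-\beta}\|\hat T^{\boldsymbol \pi^{*,-i}_{\delta}}_{\delta} J_{\delta} - T^{\boldsymbol \pi^{*,-i}_{\delta}} J_{\delta}\|$, so it suffices to show the right-hand side tends to $0$ as $\delta \to 0$.

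To bound $\sup_{z \in \sX}|\hat T^{\boldsymbol \pi^{*,-i}_{\delta}}_{\delta} J_{\delta}(z) - T^{\boldsymbol \pi^{*,-i}_{\delta}} J_{\delta}(z)|$ I would follow exactly the chain of inequalities in the proof of Theorem~\ref{finite_horizon_thm_1}: insert the intermediate expression $\min_{a^i \in \sA^i}\{\int_{\S_{\delta}^{i(z)}}[c^i(x,\boldsymbol\pi^{*,-i}_{\delta}(z),a^i) + \beta\int_{\sX} J_{\delta}(y)\,p(dy|x,\boldsymbol\pi^{*,-i}_{\delta}(z),a^i)]\,\nu_{\delta}^{i(z)}(dx)\}$ and apply the triangle inequality. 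This produces (i) an action-discretization term, dominated by $\sup_{z}\sup_{a^i \in \sA^i}$ of the difference of the bracketed quantity evaluated at $Q_{i,\delta}(a^i)$ and at $a^i$; using $\sup_{a^i} d_{\sA^i}(Q_{i,\delta}(a^i),a^i)\le\delta$, its cost part vanishes by uniform continuity of $c^i$ on the compact $\sX\times\boldsymbol\sA$, and its kernel part, because $J_{\delta}$ is piecewise constant so $\int_{\sX} J_{\delta}\,dp=\sum_{y\in\sX_{\delta}}J_{\delta}^*(\boldsymbol\pi^{*,-i}_{\delta};y)\,p^{\delta}(y|\cdot)$, is at most $\beta\,\|J_{\delta}^*(\boldsymbol\pi^{*,-i}_{\delta};\cdot)\|\,\omega_{\delta}(\delta)\le\beta K\,\omega_{\delta}(2\delta)$; and (ii) a state-averaging term, where, since $z\in\S_{\delta}^{i(z)}$ and $\diam(\S_{\delta}^{i(z)})\le2\delta$, the cost part is controlled by the oscillation of $c^i$ over sets of diameter $\le2\delta$ (uniform continuity) and the kernel part is at most $\beta\,\|J_{\delta}^*(\boldsymbol\pi^{*,-i}_{\delta};\cdot)\|\,\omega_{\delta}(2\delta)\le\beta K\,\omega_{\delta}(2\delta)$. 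Here $K\coloneqq \sup_{i}\|c^i\|/(1-\beta)$ bounds $\|J_{\delta}^*(\boldsymbol\pi^{*,-i}_{\delta};\cdot)\|$ uniformly in $\delta$. Collecting terms gives $\|\hat T^{\boldsymbol \pi^{*,-i}_{\delta}}_{\delta} J_{\delta} - T^{\boldsymbol \pi^{*,-i}_{\delta}} J_{\delta}\|\le 2\,\rho_{c}(2\delta)+2\beta K\,\omega_{\delta}(2\delta)$, with $\rho_c$ a modulus of continuity of $c^i$, and this tends to $0$ by uniform continuity of $c^i$ and Assumption~\ref{compact:as2}; dividing by $1-\beta$ finishes the argument.

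The only point that needs care — and the reason the contraction estimate is legitimate here — is that the bound $K$ on the finite-model value functions $\|J_{\delta}^*(\boldsymbol\pi^{*,-i}_{\delta};\cdot)\|$ is independent of $\delta$ (it equals $\sup_i\|c^i\|/(1-\beta)$ by the discounted-cost bound and $\beta<1$), so the constant multiplying $\omega_{\delta}(2\delta)$ does not blow up as the partition is refined. Beyond this there is no genuine obstacle: once backward induction is replaced by the fixed-point comparison, the estimate of the operator difference is the infinite-horizon, $\beta$-weighted copy of the computation already carried out for Theorem~\ref{finite_horizon_thm_1}.
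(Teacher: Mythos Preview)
Your proof is correct, but it takes a genuinely different route from the paper's. The paper argues by value iteration: starting from a common $J_0\in B(\sX)$, it shows by induction on $t$ that $\lim_{\delta\to0}\|\hat J_{\delta,t}^*(\boldsymbol\pi^{*,-i}_{\delta};\cdot)-J_t^*(\boldsymbol\pi^{*,-i}_{\delta};\cdot)\|=0$ for every iterate, and then uses the uniform-in-$\delta$ geometric convergence of both iterate sequences to their fixed points (a triangle-inequality step) to conclude. You instead go straight to the one-shot fixed-point comparison $\|J_{\delta}-J^*(\boldsymbol\pi^{*,-i}_{\delta};\cdot)\|\le(1-\beta)^{-1}\|\hat T^{\boldsymbol\pi^{*,-i}_{\delta}}_{\delta}J_{\delta}-T^{\boldsymbol\pi^{*,-i}_{\delta}}J_{\delta}\|$ and bound the operator difference once. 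Both routes reduce to the same per-state estimate from Theorem~\ref{finite_horizon_thm_1}; your version is shorter and avoids the iterate bookkeeping, but it requires you to notice explicitly (as you do) that the fixed point $J_{\delta}$ is already piecewise constant so that $\int_{\sX}J_{\delta}\,dp=\sum_{y\in\sX_{\delta}}J_{\delta}^*\,p^{\delta}$, whereas the paper's inductive scheme keeps this automatic through the $\,\widehat{\ }\,$-convention built into $\hat T_{\delta}$. It is worth noting that your direct method is precisely the device the paper itself adopts in the very next result (Theorem~\ref{discounted_thm_2}), so your choice is consistent with the paper's own toolkit; the paper's iterative proof here has the modest expository benefit of mirroring the finite-horizon argument.
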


\begin{proof}

To prove the result, we use contraction property of the operators $\hat T^{\boldsymbol \pi^{*,-i}_{\delta}}_{\delta}$ and $T^{\boldsymbol \pi^{*,-i}_{\delta}}$. Indeed, by Banach fixed point theorem, if we start with a common initial function $J_0 \in B(\sX)$, then 

\begin{align*}
	\left(\hat T^{\boldsymbol \pi^{*,-i}_{\delta}}_{\delta}\right)^t J_0 \eqqcolon  \hat J_{\delta,t}^*(\boldsymbol \pi^{*,-i}_{\delta}; \cdot) \rightarrow \hat J_{\delta}^*(\boldsymbol \pi^{*,-i}_{\delta}; \cdot), \,\,\, \left( T^{\boldsymbol \pi^{*,-i}_{\delta}}\right)^t J_0 \eqqcolon  J_{t}^*(\boldsymbol \pi^{*,-i}_{\delta}; \cdot) \rightarrow  J^*(\boldsymbol \pi^{*,-i}_{\delta}; \cdot)
\end{align*}
in sup-norm as $t \rightarrow \infty$. Hence, by using induction, we first prove that 
$$
\lim_{\delta \rightarrow 0} \|\hat J_{\delta,t}^*(\boldsymbol \pi^{*,-i}_{\delta}; \cdot)-J_t^*(\boldsymbol \pi^{*,-i}_{\delta}; \cdot)\| = 0
$$
for all $t\geq0$. Then, the result follows from the triangle inequality.

Since the function $J_0$ at time zero is common, the result trivially holds. Suppose that the statement is true for $t$ and consider $t+1$. Then,
\begin{align*}
	&\|\hat J_{\delta,t+1}^*(\boldsymbol \pi^{*,-i}_{\delta}; \cdot)-J_{t+1}^*(\boldsymbol \pi^{*,-i}_{\delta}; \cdot)\| = \|\hat T^{\boldsymbol \pi^{*,-i}_{\delta}}_{\delta} \hat J_{\delta,t}^*(\boldsymbol \pi^{*,-i}_{\delta}; \cdot)- T^{\boldsymbol \pi^{*,-i}_{\delta}} J_{t}^*(\boldsymbol \pi^{*,-i}_{\delta}; \cdot)\| \\
	&\leq \|\hat T^{\boldsymbol \pi^{*,-i}_{\delta}}_{\delta} \hat J_{\delta,t}^*(\boldsymbol \pi^{*,-i}_{\delta}; \cdot)- T^{\boldsymbol \pi^{*,-i}_{\delta}} \hat J_{\delta,t}^*(\boldsymbol \pi^{*,-i}_{\delta}; \cdot)\| \\
	&\phantom{xxxxxxxxxxxxx}+ \|T^{\boldsymbol \pi^{*,-i}_{\delta}} \hat J_{\delta,t}^*(\boldsymbol \pi^{*,-i}_{\delta}; \cdot)- T^{\boldsymbol \pi^{*,-i}_{\delta}} J_{t}^*(\boldsymbol \pi^{*,-i}_{\delta}; \cdot)\|,
\end{align*}
where the second term in the last expression converges to zero as $\delta \rightarrow 0$ by the induction hypothesis as the operator $T^{\boldsymbol \pi^{*,-i}_{\delta}}$ is contraction. Hence, it remains to prove that the first expression converges to zero as $\delta \rightarrow 0$. To this end, we define $K \coloneqq \sup_{i=1,\ldots,N} \|c^i\|$. Then, 
\begin{align*}
	&\|\hat T^{\boldsymbol \pi^{*,-i}_{\delta}}_{\delta} \hat J_{\delta,t}^*(\boldsymbol \pi^{*,-i}_{\delta}; \cdot)- T^{\boldsymbol \pi^{*,-i}_{\delta}} \hat J_{\delta,t}^*(\boldsymbol \pi^{*,-i}_{\delta}; \cdot)\| \\
	& =\sup_{z \in \sX} \bigg| \min_{a^i \in \sA^i_{\delta}} \left\{ \int_{\S_{\delta}^{i(z)}} \left[ c^i(x,\boldsymbol \pi^{*,-i}_{\delta}(z),a^i) + \beta \, \int_{\sX} \hat J_{\delta,t}^*(\boldsymbol \pi^{*,-i}_{\delta}; y) \, p(dy|x,\boldsymbol \pi^{*,-i}_{\delta}(z),a^i) \right] \, \nu_{\delta}^{i(z)}(dx) \right\} \\
	&\phantom{xxxxxxx}  - \min_{a^i \in \sA^i} \left[ c^i(z,\boldsymbol \pi^{*,-i}_{\delta}(z),a^i) + \beta \, \int_{\sX} \hat J_{\delta,t}^*(\boldsymbol \pi^{*,-i}_{\delta}; y) \, p(dy|z,\boldsymbol \pi^{*,-i}_{\delta}(z),a^i) \right] \bigg| \\
	&\leq \sup_{z \in \sX} \bigg| \min_{a^i \in \sA^i_{\delta}} \left\{ \int_{\S_{\delta}^{i(z)}} \left[ c^i(x,\boldsymbol \pi^{*,-i}_{\delta}(z),a^i) + \beta \, \int_{\sX} \hat J_{\delta,t}^*(\boldsymbol \pi^{*,-i}_{\delta}; y) \, p(dy|x,\boldsymbol \pi^{*,-i}_{\delta}(z),a^i) \right] \, \nu_{\delta}^{i(z)}(dx) \right\} \\
	&\phantom{xxxx}  - \min_{a^i \in \sA^i} \left\{ \int_{\S_{\delta}^{i(z)}} \left[ c^i(x,\boldsymbol \pi^{*,-i}_{\delta}(z),a^i) + \beta \, \int_{\sX} \hat J_{\delta,t}^*(\boldsymbol \pi^{*,-i}_{\delta}; y) \, p(dy|x,\boldsymbol \pi^{*,-i}_{\delta}(z),a^i) \right] \, \nu_{\delta}^{i(z)}(dx) \right\} \bigg| \\
	&\phantom{x}+ \sup_{z \in \sX} \bigg| \min_{a^i \in \sA^i} \left\{ \int_{\S_{\delta}^{i(z)}} \left[ c^i(x,\boldsymbol \pi^{*,-i}_{\delta}(z),a^i) + \beta \, \int_{\sX} \hat J_{\delta,t}^*(\boldsymbol \pi^{*,-i}_{\delta}; y) \, p(dy|x,\boldsymbol \pi^{*,-i}_{\delta}(z),a^i) \right] \, \nu_{\delta}^{i(z)}(dx) \right\} \\
	&\phantom{xxxxxxx} - \min_{a^i \in \sA^i} \left[ c^i(z,\boldsymbol \pi^{*,-i}_{\delta}(z),a^i) + \beta \, \int_{\sX} \hat J_{\delta,t}^*(\boldsymbol \pi^{*,-i}_{\delta}; y) \, p(dy|z,\boldsymbol \pi^{*,-i}_{\delta}(z),a^i) \right] \bigg| \\
	&\leq \sup_{z \in \sX} \sup_{a^i \in \sA^i} \bigg|  \int_{\S_{\delta}^{i(z)}} \left[ c^i(x,\boldsymbol \pi^{*,-i}_{\delta}(z),Q_{i,\delta}(a^i)) \right. \\ & \left. \qquad\qquad\qquad + \beta \, \int_{\sX} \hat J_{\delta,t}^*(\boldsymbol \pi^{*,-i}_{\delta}; y) \, p(dy|x,\boldsymbol \pi^{*,-i}_{\delta}(z),Q_{i,\delta}(a^i)) \right] \, \nu_{\delta}^{i(z)}(dx)  \\
	&\phantom{xxxxxxx} -  \int_{\S_{\delta}^{i(z)}} \left[ c^i(x,\boldsymbol \pi^{*,-i}_{\delta}(z),a^i) + \beta \, \int_{\sX} \hat J_{\delta,t}^*(\boldsymbol \pi^{*,-i}_{\delta}; y) \, p(dy|x,\boldsymbol \pi^{*,-i}_{\delta}(z),a^i) \right] \, \nu_{\delta}^{i(z)}(dx) \bigg| \\
	&\phantom{x}+ \sup_{z \in \sX} \sup_{a^i \in \sA^i} \bigg| \int_{\S_{\delta}^{i(z)}} c^i(x,\boldsymbol \pi^{*,-i}_{\delta}(z),a^i) \, \nu_{\delta}^{i(z)}(dx) - c^i(z,\boldsymbol \pi^{*,-i}_{\delta}(z),a^i) \bigg| \\
	&\phantom{x}+ \sup_{z \in \sX} \sup_{a^i \in \sA^i} \bigg| \int_{\S_{\delta}^{i(z)}} \beta \, \int_{\sX} \hat J_{\delta,t}^*(\boldsymbol \pi^{*,-i}_{\delta}; y) \, p(dy|x,\boldsymbol \pi^{*,-i}_{\delta}(z),a^i) \, \nu_{\delta}^{i(z)}(dx) \\
	&\phantom{xxxxxxxxxxxxxxxxxxxxxxxx}-  \beta \, \int_{\sX} \hat J_{\delta,t}^*(\boldsymbol \pi^{*,-i}_{\delta}; y) \, p(dy|z,\boldsymbol \pi^{*,-i}_{\delta}(z),a^i)  \bigg|. 
\end{align*}

\noindent In the last expression, the second term converges to zero as $\delta \rightarrow 0$ by uniform continuity of the function $c^i(x,\boldsymbol a)$. The last term  can be written as  
\begin{align*}
	&\beta \, \sup_{z \in \sX} \sup_{a^i \in \sA^i} \bigg| \int_{\S_{\delta}^{i(z)}} \sum_{ y \in \sX} J_{\delta,t}^*(\boldsymbol \pi^{*,-i}_{\delta}; y) \, p^{\delta}(y|x,\boldsymbol \pi^{*,-i}_{\delta}(z),a^i) \, \nu_{\delta}^{i(z)}(dx) \\
	&\phantom{xxxxxxxxxxxxxxxxxxxxxxxxxxx}-  \sum_{y \in \sX} J_{\delta,t}^*(\boldsymbol \pi^{*,-i}_{\delta}; y) \, p^{\delta}(y|z,\boldsymbol \pi^{*,-i}_{\delta}(z),a^i)  \bigg|, 
\end{align*}
and so, it can be upper bounded by $K \, \omega_{\delta}(2 \, \delta)$ as $\sup_{j=1,\ldots,k_{\delta}}\diam(\S_{\delta}^j) \leq 2 \, \delta$, which converges to zero as $\delta \rightarrow 0$ by Assumption~\ref{compact:as2}.
In the last expression, the first term can be upper bounded by 
\begin{align*}
	&\sup_{z \in \sX} \sup_{a^i \in \sA^i} \bigg|  c^i(z,\boldsymbol \pi^{*,-i}_{\delta}(z),Q_{i,\delta}(a^i)) + \beta \, \int_{\sX} \hat J_{\delta,t}^*(\boldsymbol \pi^{*,-i}_{\delta}; y) \, p(dy|z,\boldsymbol \pi^{*,-i}_{\delta}(z),Q_{i,\delta}(a^i))   \\
	&\phantom{xxxxxxxxxxxx} -  c^i(z,\boldsymbol \pi^{*,-i}_{\delta}(z),a^i) - \beta \, \int_{\sX} \hat J_{\delta,t}^*(\boldsymbol \pi^{*,-i}_{\delta}; y) \, p(dy|z,\boldsymbol \pi^{*,-i}_{\delta}(z),a^i)  \bigg| \\ 
	& \quad \leq \sup_{z \in \sX} \sup_{a^i \in \sA^i} \bigg|  c^i(z,\boldsymbol \pi^{*,-i}_{\delta}(z),Q_{i,\delta}(a^i)) - c^i(z,\boldsymbol \pi^{*,-i}_{\delta}(z),a^i) \bigg| \\
	&\qquad + \beta\, \sup_{z \in \sX} \sup_{a^i \in \sA^i} \bigg| \int_{\sX} \hat J_{\delta,t}^*(\boldsymbol \pi^{*,-i}_{\delta}; y) \, p(dy|z,\boldsymbol \pi^{*,-i}_{\delta}(z),Q_{i,\delta}(a^i)) \\ & \qquad\qquad\qquad\qquad- \int_{\sX} \hat J_{\delta,t}^*(\boldsymbol \pi^{*,-i}_{\delta}; y) \, p(dy|z,\boldsymbol \pi^{*,-i}_{\delta}(z),a^i)  \bigg| \\
	&\quad= \sup_{z \in \sX} \sup_{a^i \in \sA^i} \bigg|  c^i(z,\boldsymbol \pi^{*,-i}_{\delta}(z),Q_{i,\delta}(a^i)) - c^i(z,\boldsymbol \pi^{*,-i}_{\delta}(z),a^i) \bigg| \\
	&\qquad+ \beta \, \sup_{z \in \sX} \sup_{a^i \in \sA^i} \bigg| \sum_{ y \in \sX}  J_{\delta,t}^*(\boldsymbol \pi^{*,-i}_{\delta}; y) \, p^{\delta}(y|z,\boldsymbol \pi^{*,-i}_{\delta}(z),Q_{i,\delta}(a^i)) \\& \qquad\qquad\qquad\qquad - \sum_{y \in \sX}  J_{\delta,t}^*(\boldsymbol \pi^{*,-i}_{\delta}; y) \, p^{\delta}(y|z,\boldsymbol \pi^{*,-i}_{\delta}(z),a^i)  \bigg| 
\end{align*}
The first term above converges to zero as $\delta \rightarrow 0$ again by uniform continuity of the function $c^i(x,\boldsymbol a)$ and the second term can be upper bounded by $K \, \omega_{\delta}(2 \, \delta)$ as $\sup_{a^i \in \sA^i} d_{\sA^i}(Q_{i,\delta}(a^i),a^i) \leq \delta$, which converges to zero as $\delta \rightarrow 0$ by Assumption~\ref{compact:as2}. This completes the proof. 
\end{proof}

We will now demonstrate the main result of this section, which indicates that the stationary perfect Nash equilibrium from the finite model, when applied to the original model, functions as an approximate stationary perfect equilibrium for the original game.

\begin{theorem}\label{discounted_thm_2}
	Under Assumption~\ref{compact:as1} and Assumption~\ref{compact:as2}, for each $i=1,\ldots,N$, we have 
	$$
	\lim_{\delta \rightarrow 0} \|J^i(\boldsymbol \pi^{*}_{\delta}, \cdot)-J^*(\boldsymbol \pi^{*,-i}_{\delta}; \cdot)\| = 0, 
	$$
	where $J^i(\boldsymbol \pi^{*}_{\delta}, \cdot)$ is the cost of player~$i$ under the joint policy $\boldsymbol \pi^{*}_{\delta}$. 
\end{theorem}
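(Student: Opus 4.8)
The plan is to imitate the proof of Theorem~\ref{discounted_thm_1}: realize $J^i(\boldsymbol \pi^{*}_{\delta}, \cdot)$ and $\hat J_{\delta}^*(\boldsymbol \pi^{*,-i}_{\delta}; \cdot)$ as fixed points of two $\beta$-contractions on $B(\sX)$ that agree on the relevant argument up to a vanishing error, deduce that the fixed points are close, and then finish with the triangle inequality
\[
\|J^i(\boldsymbol \pi^{*}_{\delta}, \cdot)-J^*(\boldsymbol \pi^{*,-i}_{\delta}; \cdot)\| \le \|J^i(\boldsymbol \pi^{*}_{\delta}, \cdot)-\hat J_{\delta}^*(\boldsymbol \pi^{*,-i}_{\delta}; \cdot)\| + \|\hat J_{\delta}^*(\boldsymbol \pi^{*,-i}_{\delta}; \cdot)-J^*(\boldsymbol \pi^{*,-i}_{\delta}; \cdot)\|,
\]
whose second term already tends to $0$ by Theorem~\ref{discounted_thm_1}.

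First I would record the two fixed-point characterizations. The cost $J^i(\boldsymbol \pi^{*}_{\delta}, \cdot)$ under the joint policy $\boldsymbol \pi^{*}_{\delta}$ is the unique fixed point of the $\beta$-contraction $T^{\boldsymbol \pi^{*}_{\delta},i}\colon B(\sX)\to B(\sX)$,
\[
T^{\boldsymbol \pi^{*}_{\delta},i} J(z) \coloneqq c^i(z,\boldsymbol \pi^{*}_{\delta}(z)) + \beta \int_{\sX} J(y)\, p(dy|z,\boldsymbol \pi^{*}_{\delta}(z)),
\]
while, since $\boldsymbol \pi^{*}_{\delta}$ is a Nash equilibrium of the finite $\delta$-model, the $\min$ defining $\hat T^{\boldsymbol \pi^{*,-i}_{\delta}}_{\delta}$ evaluated at $\hat J_{\delta}^*(\boldsymbol \pi^{*,-i}_{\delta}; \cdot)$ is attained at $\pi_{\delta}^{*,i}$; hence $v_{\delta} \coloneqq \hat J_{\delta}^*(\boldsymbol \pi^{*,-i}_{\delta}; \cdot)$ is a fixed point of the ($\min$-resolved) evaluation operator $\hat T^{\boldsymbol \pi^{*}_{\delta},i}_{\delta}\colon B(\sX)\to B(\sX)$,
\[
\hat T^{\boldsymbol \pi^{*}_{\delta},i}_{\delta} J(z) \coloneqq \int_{\S_{\delta}^{i(z)}} \left[ c^i(x,\boldsymbol \pi^{*}_{\delta}(z)) + \beta \int_{\sX} \hat J(y)\, p(dy|x,\boldsymbol \pi^{*}_{\delta}(z)) \right] \nu_{\delta}^{i(z)}(dx), \qquad \hat J = J\circ Q_{\delta},
\]
which is likewise a $\beta$-contraction. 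Writing $u_{\delta} \coloneqq J^i(\boldsymbol \pi^{*}_{\delta}, \cdot)$, the contraction bound $\|u_{\delta}-v_{\delta}\| = \|T^{\boldsymbol \pi^{*}_{\delta},i} u_{\delta} - \hat T^{\boldsymbol \pi^{*}_{\delta},i}_{\delta} v_{\delta}\| \le \beta\|u_{\delta}-v_{\delta}\| + \|(T^{\boldsymbol \pi^{*}_{\delta},i}-\hat T^{\boldsymbol \pi^{*}_{\delta},i}_{\delta})v_{\delta}\|$ then yields $\|u_{\delta}-v_{\delta}\| \le (1-\beta)^{-1}\|(T^{\boldsymbol \pi^{*}_{\delta},i}-\hat T^{\boldsymbol \pi^{*}_{\delta},i}_{\delta})v_{\delta}\|$.

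It therefore remains to show $\|(T^{\boldsymbol \pi^{*}_{\delta},i}-\hat T^{\boldsymbol \pi^{*}_{\delta},i}_{\delta})v_{\delta}\| \to 0$, which is exactly the discounted analogue of the core estimate already carried out in the proof of Theorem~\ref{discounted_thm_1}. Splitting, for each $z$, into a cost part $|c^i(z,\boldsymbol \pi^{*}_{\delta}(z)) - \int_{\S_{\delta}^{i(z)}} c^i(x,\boldsymbol \pi^{*}_{\delta}(z))\,\nu_{\delta}^{i(z)}(dx)|$ and a transition part, the cost part is bounded by the modulus of continuity of $c^i$ at scale $\diam(\S_{\delta}^{i(z)}) \le 2\delta$ and vanishes by uniform continuity of $c^i$ on the compact set $\sX\times\boldsymbol \sA$. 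For the transition part, the essential observation is that $v_{\delta}$ is constant on each bin $\S_{\delta}^j$ by construction, so $\int_{\sX} v_{\delta}(y)\, p(dy|x,\boldsymbol a) = \sum_{x_j\in\sX_{\delta}} J_{\delta}^*(\boldsymbol \pi^{*,-i}_{\delta}; x_j)\, p^{\delta}(x_j|x,\boldsymbol a)$, whence the transition part is at most
\[
\beta\,\|v_{\delta}\|\,\sup_{z\in\sX}\int_{\S_{\delta}^{i(z)}} \bigl\| p^{\delta}(\cdot|z,\boldsymbol \pi^{*}_{\delta}(z)) - p^{\delta}(\cdot|x,\boldsymbol \pi^{*}_{\delta}(z)) \bigr\|_{TV}\, \nu_{\delta}^{i(z)}(dx) \le \beta\,\|v_{\delta}\|\,\omega_{\delta}(2\delta),
\]
using $d_{\sX}(z,x)\le 2\delta$ for $x$ in the same bin as $z$. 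Since $\|v_{\delta}\| = \|\hat J_{\delta}^*(\boldsymbol \pi^{*,-i}_{\delta}; \cdot)\| \le K/(1-\beta)$ with $K = \sup_{i}\|c^i\|$, uniformly in $\delta$, the transition part vanishes by Assumption~\ref{compact:as2}. Combining, $\|u_{\delta}-v_{\delta}\|\to0$, and the theorem follows from the displayed triangle inequality and Theorem~\ref{discounted_thm_1}.

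I expect the only genuine obstacle to be the estimate on $\|(T^{\boldsymbol \pi^{*}_{\delta},i}-\hat T^{\boldsymbol \pi^{*}_{\delta},i}_{\delta})v_{\delta}\|$, and even there the two facts that make it work are already in hand: (i) the Nash property of $\boldsymbol \pi^{*}_{\delta}$ removes the $\min$, so — unlike in Theorem~\ref{discounted_thm_1} — no $Q_{i,\delta}$-rounding term arises; and (ii) the piecewise-constant structure of $v_{\delta}$ collapses the Borel integral against $p$ into a finite sum against $p^{\delta}$, after which Assumption~\ref{compact:as2} and $\diam(\S_{\delta}^{i(z)})\le 2\delta$ apply verbatim. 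An alternative route, mirroring the proof of Theorem~\ref{discounted_thm_1} even more closely, is to iterate both contractions from a common $J_0\in B(\sX)$, prove $\lim_{\delta\to0}\|(\hat T^{\boldsymbol \pi^{*}_{\delta},i}_{\delta})^t J_0 - (T^{\boldsymbol \pi^{*}_{\delta},i})^t J_0\| = 0$ by induction on $t$ (the increment being controlled exactly as above, noting that $(\hat T^{\boldsymbol \pi^{*}_{\delta},i}_{\delta})^t J_0$ is piecewise constant for $t\ge 1$), and then let $t\to\infty$ using uniform boundedness of the iterates and the uniform contraction constant $\beta<1$.
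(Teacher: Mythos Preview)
Your proof is correct and follows essentially the same approach as the paper: both realize $J^i(\boldsymbol \pi^{*}_{\delta},\cdot)$ and $\hat J_{\delta}^*(\boldsymbol \pi^{*,-i}_{\delta};\cdot)$ as fixed points of the evaluation contractions $T^{\boldsymbol \pi^{*}_{\delta},i}$ and its $\delta$-averaged counterpart (using the Nash property of $\boldsymbol \pi^{*}_{\delta}$ to resolve the $\min$), bound the fixed-point gap via the $\beta$-contraction and a single operator-difference term, and conclude with Theorem~\ref{discounted_thm_1}. The only cosmetic difference is where the operator difference is evaluated---the paper at $J^*(\boldsymbol \pi^{*,-i}_{\delta};\cdot)$, you at $\hat J_{\delta}^*(\boldsymbol \pi^{*,-i}_{\delta};\cdot)$---and your choice is if anything slightly cleaner, since the piecewise-constant structure of $\hat J_{\delta}^*$ makes the reduction to $p^{\delta}$ and the bound $\beta\,\|v_{\delta}\|\,\omega_{\delta}(2\delta)$ immediate.
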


\begin{proof}

We prove the result by contraction property of the operators. We first define the following operator on $B(\sX)$:
$$
T^{\boldsymbol \pi^{*}_{\delta},i} J(z) \coloneqq c^i(z,\boldsymbol \pi^{*}_{\delta}(z)) + \beta \, \int_{\sX} J(y) \, p(dy|z,\boldsymbol \pi^{*}_{\delta}(z)). 
$$
It is trivial to prove that $T^{\boldsymbol \pi^{*}_{\delta},i}$ is $\beta$-contraction and the unique fixed point of it is $J^i(\boldsymbol \pi^{*}_{\delta}, \cdot)$. We also define the following operator on $B(\sX)$:
$$
\hat T^{\boldsymbol \pi^{*}_{\delta},i} J(z) \coloneqq \int_{\S_{\delta}^{i(z)}} \left[ c^i(x,\boldsymbol \pi^{*}_{\delta}(z)) + \beta \, \int_{\sX} J(y) \, p(dy|x,\boldsymbol \pi^{*}_{\delta}(z)) \right] \, \nu_{\delta}^{i(z)}(dx). 
$$
This operator is also $\beta$-contraction and the unique fixed point of it is $\hat J_{\delta}^*(\boldsymbol \pi^{*,-i}_{\delta}; \cdot)$ since the minimum is achieved in $\hat T^{\boldsymbol \pi^{*,-i}_{\delta}}_{\delta} \hat J_{\delta}^*(\boldsymbol \pi^{*,-i}_{\delta}; \cdot)$ by $\pi_{\delta}^{*,i}$ as $\boldsymbol \pi^{*}_{\delta}$ is a Nash equilibrium in the finite game; that is
$$
\hat J_{\delta}^*(\boldsymbol \pi^{*,-i}_{\delta}; \cdot)  = \hat T^{\boldsymbol \pi^{*,-i}_{\delta}}_{\delta} \hat J_{\delta}^*(\boldsymbol \pi^{*,-i}_{\delta}; \cdot) = \hat T^{\boldsymbol \pi^{*}_{\delta},i} \hat J_{\delta}^*(\boldsymbol \pi^{*,-i}_{\delta}; \cdot).
$$
With these observations, we then have 
\begin{align*}
	& \|J^i(\boldsymbol \pi^{*}_{\delta}, \cdot)-J^*(\boldsymbol \pi^{*,-i}_{\delta}; \cdot)\|  \\
	&\quad \leq \|T^{\boldsymbol \pi^{*}_{\delta},i} J^i(\boldsymbol \pi^{*}_{\delta}, \cdot) - T^{\boldsymbol \pi^{*}_{\delta},i} J^*(\boldsymbol \pi^{*,-i}_{\delta}; \cdot) \| + \| T^{\boldsymbol \pi^{*}_{\delta},i} J^*(\boldsymbol \pi^{*,-i}_{\delta}; \cdot)  - \hat T^{\boldsymbol \pi^{*}_{\delta},i} J^*(\boldsymbol \pi^{*,-i}_{\delta}; \cdot) \| \\ 
	&\phantom{xxxxx}+ \| \hat T^{\boldsymbol \pi^{*}_{\delta},i} J^*(\boldsymbol \pi^{*,-i}_{\delta}; \cdot) - \hat T^{\boldsymbol \pi^{*}_{\delta},i} \hat J_{\delta}^*(\boldsymbol \pi^{*,-i}_{\delta}; \cdot) \| + \| \hat J_{\delta}^*(\boldsymbol \pi^{*,-i}_{\delta}; \cdot) - J^*(\boldsymbol \pi^{*,-i}_{\delta}; \cdot)\| \\
	&\quad \leq  \beta \, \|J^i(\boldsymbol \pi^{*}_{\delta}, \cdot) - J^*(\boldsymbol \pi^{*,-i}_{\delta}; \cdot) \| + \| T^{\boldsymbol \pi^{*}_{\delta},i} J^*(\boldsymbol \pi^{*,-i}_{\delta}; \cdot)  - \hat T^{\boldsymbol \pi^{*}_{\delta},i} J^*(\boldsymbol \pi^{*,-i}_{\delta}; \cdot) \| \\
	&\phantom{xxxxx}+ (1+\beta) \,  \| J^*(\boldsymbol \pi^{*,-i}_{\delta}; \cdot) -  \hat J_{\delta}^*(\boldsymbol \pi^{*,-i}_{\delta}; \cdot) \| 
\end{align*}
Hence we obtain
\begin{align*}
	&\|J^i(\boldsymbol \pi^{*}_{\delta}, \cdot)-J^*(\boldsymbol \pi^{*,-i}_{\delta}; \cdot)\| \\
	&\phantom{xxxxx}\leq \frac{\| T^{\boldsymbol \pi^{*}_{\delta},i} J^*(\boldsymbol \pi^{*,-i}_{\delta}; \cdot)  - \hat T^{\boldsymbol \pi^{*}_{\delta},i} J^*(\boldsymbol \pi^{*,-i}_{\delta}; \cdot) \| + (1+\beta) \,  \| J^*(\boldsymbol \pi^{*,-i}_{\delta}; \cdot) -  \hat J_{\delta}^*(\boldsymbol \pi^{*,-i}_{\delta}; \cdot) \|}{1-\beta}
\end{align*}
The second term in the last expression converges to zero as $\delta \rightarrow 0$ by Theorem~\ref{discounted_thm_1}. For the first term,  using exactly the same arguments that we used in the proof of Theorem~\ref{discounted_thm_1}, we can establish that this term converges to zero as $\delta \rightarrow 0$. This completes the proof.   
\end{proof}

We next discuss the implications of Theorem~\ref{discounted_thm_2}. Note that for each $i=1,\ldots,N$, 
\begin{align*}
	J^*(\boldsymbol \pi^{*,-i}_{\delta}; x) = \inf_{\pi \in \Pi^i} \cE^{(\boldsymbol \pi^{*,-i}_{\delta},\pi^i)}_x \left[ \sum_{t=0}^{\infty} \beta^t \, c^i(x_t,\boldsymbol a_t) \right].  
\end{align*}
Hence, Theorem~\ref{discounted_thm_2} implies that for any $\varepsilon > 0$, there exists $\delta(\varepsilon)$ such that for any $\delta \leq \delta(\varepsilon)$, the policy $\boldsymbol \pi^{*}_{\delta}$ is stationary perfect $\varepsilon$-Nash equilibrium. Moreover, since Theorem~\ref{discounted_thm_2} is still true if the initial time is greater than zero, we can conclude that stationary perfect $\varepsilon$-Nash equilibrium $\boldsymbol \pi^{*}_{\delta}$ is also subgame perfect $\varepsilon$-Nash equilibrium.

\section{Extension to Non-Compact State Spaces}\label{noncompact}

In this section, we explain how the results established in the previous sections can be extended to non-compact state stochastic games. We employ the following strategy: (i) first, we define a sequence of compact-state games to approximate the original game; (ii) then, we use the previous results to approximate the compact-state games with finite-state models; and (iii) finally, we prove the convergence of the finite-state models to the original model. Notably, steps (ii) and (iii) will be accomplished simultaneously. We impose the assumptions below on the components of the stochastic game. With the exception of the local compactness of the state space, these are the same with Assumption~\ref{compact:as1}.

\begin{tcolorbox}
	[colback=white!100]
	\begin{assumption}
		\label{noncompact_as1}
		\begin{itemize}
			\item [ ]
			\item [(a)] The one-stage cost functions $c^i$ are in $C_b(\sX \times \boldsymbol \sA)$.
			\item [(b)] The stochastic kernel $p(\,\cdot\,|x,\boldsymbol a)$ is setwise continuous in $(x,\boldsymbol a)$.
			\item [(c)] $\sX$ is locally compact and $\{\sA_i\}_{i=1}^N$ are compact.
		\end{itemize}
	\end{assumption}
\end{tcolorbox}

\subsection{Compact Approximation of $N$-player Game} \label{compact_model}

Since $\sX$ is locally compact separable metric space, there exists a nested sequence of compact sets $\{K_n\}$ such that $K_n \subset \intr K_{n+1}$ and $\sX = \bigcup_{n=1}^{\infty} K_n$ \cite[Lemma 2.76, p. 58]{AlBo06}. Let $\{\nu_n\}$ be a sequence of probability measures and for each $n\geq1$, $\nu_n \in \P(K_n^c)$. Similar to the finite-state game construction in Section~\ref{finite_model}, we define a sequence of compact-state games to approximate the original model.

To this end, for each $n$, let $\sX_n = K_n \cup \{\Delta_n\}$, where $\Delta_n \in K_n^c$ is a so-called pseudo-state. We define the transition probability $p_n$ on $\sX_n$ given $\sX_n\times \boldsymbol \sA$ and the one-stage cost functions $c_n^i: \sX_n\times \boldsymbol \sA \rightarrow \R$ by
\begin{align*}
	p_n(\,\cdot\,|x,a) &= \begin{cases}
		p\bigl(\,\cdot\, \bigcap K_n |x,\boldsymbol a\bigr) + p\bigl(K_n^c|x,\boldsymbol a\bigr) \, \delta_{\Delta_n},   &\text{ if } x\in K_n  \\
		\int_{K_n^c} \bigl( p\bigl(\,\cdot\, \bigcap K_n |z,\boldsymbol a\bigr) + p\bigl(K_n^c|z,\boldsymbol a\bigr) \, \delta_{\Delta_n} \bigr) \, \nu_n(dz) ,  &\text{ if } x=\Delta_n,
	\end{cases} \nonumber \\
	\\
	c_n^i(x,\boldsymbol a) &= \begin{cases}
		c^i(x,\boldsymbol a),   &\text{ if } x\in K_n  \\
		\int_{K_n^c} c^i(z,\boldsymbol a) \, \nu_n(dz) ,  &\text{ if } x=\Delta_n. \nonumber
	\end{cases}
\end{align*}
With these definitions, compact-state non-zero sum stochastic game is defined as a stochastic game with the components $\bigl( \sX_n, \boldsymbol \sA, p_n, c_n^1,\ldots,c_n^N\bigr)$. History spaces, policies and cost functions are defined in a similar way as in the original model. To distinguish them from the original game model, we add $n$ as a subscript in each object for the compact model. 

In addition to Assumption~\ref{noncompact_as1}, we suppose that the following is true. 

\begin{tcolorbox}
	[colback=white!100]
	\begin{assumption}
		\label{noncompact:as2}
		For each $n\geq 1$, the transition probability $p_n$ satisfies Assumption~\ref{compact:as2}. 
	\end{assumption}
\end{tcolorbox}

This additional assumption is true if the original transition probability $p(\cdot|x,\boldsymbol a)$ is continuous in total variation norm. 

Note that under Assumption~\ref{noncompact_as1} and Assumption~\ref{noncompact:as2}, for each $n \geq 1$, compact-state game model with state space $\sX_n$ satisfies Assumption~\ref{compact:as1} and Assumption~\ref{compact:as2}. Hence, approximation results established in the previous sections are applicable to this game model. In the rest of this section, we will concentrate on the discounted cost criterion. However, a similar analysis can be applied to the finite-horizon cost criterion under the same set of assumptions. To avoid repetition, we will not include that analysis here.

For each \( n \geq 1 \), Theorem~\ref{discounted_thm_2} guarantees the existence of a stationary perfect \(\varepsilon(n)\)-Nash equilibrium \(\boldsymbol{\pi}^*_{n}\) for a compact-state game with state space \(\sX_n\), derived from some finite game model, where \(\varepsilon(n) \rightarrow 0\) as \( n \rightarrow \infty \). Hence, we have the following:
\begin{align*}
	\| J_n^i(\boldsymbol \pi^{*}_{n},\cdot) -  J_n^*(\boldsymbol \pi^{*,-i}_{n};\cdot) \| \leq \varepsilon(n) \,\,\,\, \forall i=1,\ldots,N.
\end{align*}
Note that, for all $i=1,\ldots,N$, we have 
\begin{align*}
	T^{\boldsymbol \pi^{*,-i}_{n}}_{n} J_{n}^*(\boldsymbol \pi^{*,-i}_{n}; \cdot) = J_{n}^*(\boldsymbol \pi^{*,-i}_{n}; \cdot), 
\end{align*}
where the operator $T^{\boldsymbol \pi^{*,-i}_{n}}_{n}: B(\sX_{n}) \rightarrow B(\sX_{n})$ is defined as 
$$
T^{\boldsymbol \pi^{*,-i}_{n}}_{n} J(x) \coloneqq \min_{a^i \in \sA^i}  \left[ c_n^i(x,\boldsymbol \pi^{*,-i}_{n}(x),a^i) + \beta \, \int_{\sX_n}  J(y) \, p_n(dy|x,\boldsymbol \pi^{*,-i}_{n}(x),a^i) \right].
$$
We now extend the definition of the operator $T^{\boldsymbol \pi^{*,-i}_{n}}_{n}$ to $B(\sX)$ as follows:
$$
\hat T^{\boldsymbol \pi^{*,-i}_{n}}_{n} J(x) \coloneqq \min_{a^i \in \sA^i} \left[ \hat c_n^i(x,\boldsymbol \pi^{*,-i}_{n}(x),a^i) + \beta \, \int_{\sX} J(y) \, \hat p_n(dy|x,\boldsymbol \pi^{*,-i}_{n}(x),a^i) \right], 
$$
where
\begin{align*}
	\hat p_n(\,\cdot\,|x,a) &= \begin{cases} 
		p(\,\cdot\,|x,\boldsymbol a),   &\text{ if } x\in K_n  \\
		\int_{K_n^c} p\bigl(\,\cdot\,|z,\boldsymbol a) \, \nu_n(dz) ,  &\text{ if } x \in K_n^c,
	\end{cases} \\ \\
	\hat c_n^i(x,\boldsymbol a) &= \begin{cases}
		c^i(x,\boldsymbol a),   &\text{ if } x\in K_n  \\
		\int_{K_n^c} c^i(z,\boldsymbol a) \, \nu_n(dz) ,  &\text{ if } x \in K_n^c. 
	\end{cases}
\end{align*}
One can prove that 
\begin{align*}
	\hat T^{\boldsymbol \pi^{*,-i}_{n}}_{n} \hat J_{n}^*(\boldsymbol \pi^{*,-i}_{n}; \cdot) = \hat J_{n}^*(\boldsymbol \pi^{*,-i}_{n}; \cdot), 
\end{align*}
where, in this case, $"$$\,\,\widehat{ }\,\,\,$$"$ means extensions of functions defined on $\sX_{n}$ to $\sX$ as follows:
$$
\hat J(x) = J(x) \,\, \text{if} \,\, x \in K_n, \,\, \hat J(x) = J(\Delta_n) \,\, \text{if} \,\, x \in K_n^c.
$$ 
We can also extend policies in a similar manner, but to avoid complicating the notation, we will not use the notation $"$$\,\,\widehat{ }\,\,\,$$"$ in this case.

\begin{theorem}\label{noncompact_discounted_thm_1}
	Under Assumption~\ref{noncompact_as1} and Assumption~\ref{noncompact:as2}, for each $i=1,\ldots,N$, we have 
	$$
	\lim_{n \rightarrow \infty} \|\hat J_{n}^*(\boldsymbol \pi^{*,-i}_{n}; \cdot)-J^*(\boldsymbol \pi^{*,-i}_{n}; \cdot)\|_K = 0
	$$
	for any compact $K \subset \sX$, where $\|\cdot\|_K$ is the sup-norm on the set $K$.
\end{theorem}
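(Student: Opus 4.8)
The plan is to rerun the value-iteration comparison from the proof of Theorem~\ref{discounted_thm_1}, but localized to compact subsets of $\sX$, because on $B(\sX)$ the operator $\hat T^{\boldsymbol \pi^{*,-i}_{n}}_{n}$ agrees with the original best-response operator $T^{\boldsymbol \pi^{*,-i}_{n}}$ (the $\beta$-contraction whose fixed point is $J^*(\boldsymbol \pi^{*,-i}_{n};\cdot)$, cf.\ \eqref{optimality_eq_discounted}) only on $K_n$; on $K_n^c$ the pseudo-state averaging in $\hat c_n^i,\hat p_n$ distorts the cost and the kernel. Both operators are $\beta$-contractions on $(B(\sX),\|\cdot\|)$, so iterating from the common initial function $J_0\equiv 0$ gives $(\hat T^{\boldsymbol \pi^{*,-i}_{n}}_{n})^t J_0\to\hat J_{n}^*(\boldsymbol \pi^{*,-i}_{n};\cdot)$ and $(T^{\boldsymbol \pi^{*,-i}_{n}})^t J_0\to J^*(\boldsymbol \pi^{*,-i}_{n};\cdot)$ in sup-norm as $t\to\infty$, with all iterates and both fixed points bounded in sup-norm by $M\coloneqq\sup_i\|c^i\|/(1-\beta)$, uniformly in $n$ and $t$. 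The one structural identity I would record first is that, directly from the definitions of $\hat c_n^i$ and $\hat p_n$, one has $\hat T^{\boldsymbol \pi^{*,-i}_{n}}_{n}J(x)=T^{\boldsymbol \pi^{*,-i}_{n}}J(x)$ for every $x\in K_n$ and every $J\in B(\sX)$.

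The core step is the claim that for each compact $K\subset\sX$ and each $t\ge 0$, $\|(\hat T^{\boldsymbol \pi^{*,-i}_{n}}_{n})^t J_0-(T^{\boldsymbol \pi^{*,-i}_{n}})^t J_0\|_K\to 0$ as $n\to\infty$, proved by induction on $t$ with the statement universally quantified over all compact $K$. Write $E_n^{(t)}$ for this difference; $\|E_n^{(t)}\|\le 2M$. The case $t=0$ is trivial. For the step, fix $K$ and take $n$ large enough that $K\subset K_n$; then for $x\in K$ the identity above gives $(\hat T^{\boldsymbol \pi^{*,-i}_{n}}_{n})^{t+1}J_0(x)=T^{\boldsymbol \pi^{*,-i}_{n}}\bigl[(\hat T^{\boldsymbol \pi^{*,-i}_{n}}_{n})^t J_0\bigr](x)$, so, bounding $|\min_{a^i}f-\min_{a^i}g|\le\sup_{a^i}|f-g|$ exactly as in Theorem~\ref{discounted_thm_1} and then splitting the resulting integral over $K_m$ and its complement, one obtains for every $m\ge 1$
\[
\|E_n^{(t+1)}\|_K \le \beta\|E_n^{(t)}\|_{K_m} + 2\beta M\sup_{x\in K,\,\boldsymbol a\in\boldsymbol \sA}p(K_m^c\,|\,x,\boldsymbol a).
\]
Taking $\limsup_{n\to\infty}$ and applying the induction hypothesis to the compact set $K_m$ kills the first term for each $m$; then letting $m\to\infty$ and using the tightness estimate below closes the induction.

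The tightness estimate I would need is: for every compact $K\subset\sX$, $\sup_{x\in K,\,\boldsymbol a\in\boldsymbol \sA}p(K_m^c\,|\,x,\boldsymbol a)\to 0$ as $m\to\infty$. This follows from Dini's theorem: by setwise continuity of $p$ (applied to the indicator $\mathbf 1_{K_m}$), the map $(x,\boldsymbol a)\mapsto p(K_m^c\,|\,x,\boldsymbol a)=1-p(K_m\,|\,x,\boldsymbol a)$ is continuous on the compact set $K\times\boldsymbol \sA$, it is nonincreasing in $m$ (as $K_m\uparrow\sX$, so $K_m^c\downarrow\emptyset$), and it decreases pointwise to $0$ by continuity from above of $p(\cdot\,|\,x,\boldsymbol a)$; hence the convergence is uniform on $K\times\boldsymbol \sA$.

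Finally, to conclude the theorem, fix a compact $K$ and $\varepsilon>0$, choose $t$ with $\beta^t M<\varepsilon/3$, and write
\[
\|\hat J_{n}^*(\boldsymbol \pi^{*,-i}_{n};\cdot)-J^*(\boldsymbol \pi^{*,-i}_{n};\cdot)\|_K \le \beta^t M + \|E_n^{(t)}\|_K + \beta^t M,
\]
using that each value function is the fixed point of its own $\beta$-contraction applied $t$ times to $J_0$; by the core claim the middle term is $<\varepsilon/3$ for all $n$ large. I expect the main obstacle to be precisely the coordination of the two limits $n\to\infty$ and $t\to\infty$: since $\hat T^{\boldsymbol \pi^{*,-i}_{n}}_{n}$ and $T^{\boldsymbol \pi^{*,-i}_{n}}$ coincide only on the growing sets $K_n$, each application of an operator leaks a little mass onto $K_n^c$, and it is the tightness estimate together with the ``$\limsup_n$, then $m\to\infty$'' bookkeeping inside the induction that keeps this leakage from accumulating over the $t$ steps. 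The dependence of the policies $\boldsymbol \pi^{*,-i}_{n}$ on $n$ causes no difficulty, since every bound used above is uniform over policies.
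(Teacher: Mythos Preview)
Your proposal is correct and follows essentially the same route as the paper: iterate the two $\beta$-contractions from a common $J_0$, prove by induction on $t$ that the iterates agree asymptotically on every compact set (using that the two operators coincide on $K_n$ once $K\subset K_n$ and a tightness bound to control the mass on complements), and then pass to the fixed points via the uniform geometric rate $\beta^t M$. The only cosmetic difference is that you obtain the tightness estimate $\sup_{K\times\boldsymbol\sA}p(K_m^c\,|\,\cdot)\to 0$ via Dini's theorem on the given exhaustion $\{K_m\}$, whereas the paper invokes weak continuity of $p$ and Prokhorov to produce a single $K_\epsilon$; these are interchangeable here.
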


\begin{proof}

To prove the result, we use contraction property of the operators $\hat T^{\boldsymbol \pi^{*,-i}_{n}}_{n}$ and $T^{\boldsymbol \pi^{*,-i}_{n}}$. Indeed, by Banach fixed point theorem, if we start with a common initial function $J_0 \in B(\sX)$, then 
\begin{align*}
	\left(\hat T^{\boldsymbol \pi^{*,-i}_{n}}_{n}\right)^t J_0 \eqqcolon  \hat J_{n,t}^*(\boldsymbol \pi^{*,-i}_{n}; \cdot) \rightarrow \hat J_{n}^*(\boldsymbol \pi^{*,-i}_{n}; \cdot), \,\,\, \left( T^{\boldsymbol \pi^{*,-i}_{n}}\right)^t J_0 \eqqcolon  J_{t}^*(\boldsymbol \pi^{*,-i}_{n}; \cdot) \rightarrow  J^*(\boldsymbol \pi^{*,-i}_{n}; \cdot)
\end{align*}
in sup-norm as $t \rightarrow \infty$ (and so, in sup-norm on any compact set $K$ as $t \rightarrow \infty$). Hence, by using induction, we first prove that 
$$
\lim_{n \rightarrow \infty} \|\hat J_{n,t}^*(\boldsymbol \pi^{*,-i}_{n}; \cdot)-J_t^*(\boldsymbol \pi^{*,-i}_{n}; \cdot)\|_K = 0
$$
for all $t\geq0$ and for any compact $K \subset \sX$. Then, the result follows from the triangle inequality.

Since the function $J_0$ at time zero is common, the result trivially holds. Suppose that the statement is true for $t$ and consider $t+1$. Fix any compact $K \subset \sX$. By definition of $\hat p_n$ and $\hat c_n^i$, there exists $n_0 \geq 1$ such that for all $n \geq n_0$,  we have $\hat p_n = p$ and $\hat c_n^i = c^i$ on $K$ as $K \subset K_n$. With this observation, for each $n \geq n_0$, we have 
\begin{align}
	&\|\hat J_{n,t+1}^*(\boldsymbol \pi^{*,-i}_{n}; \cdot)-J_{t+1}^*(\boldsymbol \pi^{*,-i}_{n}; \cdot)\|_K \nonumber \\
	&= \sup_{x \in K} \bigg| \min_{a^i \in \sA^i} \left[ c^i(x,\boldsymbol \pi^{*,-i}_{n}(x),a^i) + \beta \, \int_{\sX} \hat J_{n,t}^*(\boldsymbol \pi^{*,-i}_{n}; y) \, p(dy|x,\boldsymbol \pi^{*,-i}_{n}(x),a^i) \right] \nonumber  \\
	&\phantom{xxxxxxxxxxx}- \min_{a^i \in \sA^i} \left[ c^i(x,\boldsymbol \pi^{*,-i}_{n}(x),a^i) + \beta \, \int_{\sX} J_{t}^*(\boldsymbol \pi^{*,-i}_{n}; dy) \, p(dy|x,\boldsymbol \pi^{*,-i}_{n}(x),a^i) \right] \bigg| \nonumber \\
	&\leq \beta \hspace{-10pt} \sup_{(x,a^i) \in K \times \sA^i} \bigg| \int_{\sX} \hat J_{n,t}^*(\boldsymbol \pi^{*,-i}_{n}; y) \, p(dy|x,\boldsymbol \pi^{*,-i}_{n}(x),a^i) -  \int_{\sX} J_{t}^*(\boldsymbol \pi^{*,-i}_{n}; y) \, p(dy|x,\boldsymbol \pi^{*,-i}_{n}(x),a^i) \bigg|. \label{noncompact-eqq1}
\end{align}

\noindent Note that since $p$ is setwise continuous, it is also weakly continuous. Therefore, the set of probability measures $\{p(\cdot|x,\boldsymbol \pi^{*,-i}_{n}(x),a^i)\}_{(x,n,a^i) \in K \times \N \times \sA^i}$ is tight. Hence, for any $\epsilon > 0$, there exists a compact set $K_{\epsilon} \subset \sX$ such that
$$
\sup_{(x,n, a^i) \in K \times \N \times \sA^i} p(K_{\epsilon}^c|x,\boldsymbol \pi^{*,-i}_{n}(x),a^i) \leq \epsilon. 
$$
Let $M \coloneqq \sup_{i=1,\ldots,N} \|c^i\|$. One can prove that 
$$
\|\hat J_{n,t}^*(\boldsymbol \pi^{*,-i}_{n}; \cdot)\|, \, \|J_{t}^*(\boldsymbol \pi^{*,-i}_{n}; \cdot)\| \leq \frac{M}{1-\beta}. 
$$
In view of this, we can obtain
\begin{align*}
	(\ref{noncompact-eqq1}) \leq \beta \, \|\hat J_{n,t}^*(\boldsymbol \pi^{*,-i}_{n}; \cdot)-J_t^*(\boldsymbol \pi^{*,-i}_{n}; \cdot)\|_{K_{\epsilon}} + \beta \, \frac{2 M}{1-\beta} \, \epsilon.
\end{align*}
The first term in the last expression converges to zero as $n \rightarrow \infty$ by the induction hypothesis. Since $\epsilon$ is arbitrary, this completes the proof. 
\end{proof}

We will now establish the central result of this section, which shows that the stationary perfect Nash equilibrium derived from the finite model, when applied to the original model, acts as an approximate stationary perfect equilibrium for the original game.

\begin{theorem}\label{noncompact_discounted_thm_2}
	Under Assumption~\ref{noncompact_as1} and Assumption~\ref{noncompact:as2}, for each $i=1,\ldots,N$, we have 
	$$
	\lim_{n \rightarrow \infty} \|J^i(\boldsymbol \pi^{*}_{n}, \cdot)-J^*(\boldsymbol \pi^{*,-i}_{n}; \cdot)\|_K = 0
	$$
	for any compact $K \subset \sX$, where $J^i(\boldsymbol \pi^{*}_{n}, \cdot)$ is the cost of player~$i$ under the joint policy $\boldsymbol \pi^{*}_{n}$. 
\end{theorem}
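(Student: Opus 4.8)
The plan is to mimic the structure of the proof of Theorem~\ref{discounted_thm_2}, but carrying out all estimates in the local sup-norm $\|\cdot\|_K$ on a fixed compact set $K \subset \sX$, and invoking Theorem~\ref{noncompact_discounted_thm_1} in place of Theorem~\ref{discounted_thm_1}. Fix $i$ and a compact set $K \subset \sX$. I first introduce, on $B(\sX)$, the one-player evaluation operator $T^{\boldsymbol \pi^{*}_{n},i} J(x) \coloneqq c^i(x,\boldsymbol \pi^{*}_{n}(x)) + \beta \int_{\sX} J(y)\, p(dy|x,\boldsymbol \pi^{*}_{n}(x))$, which is a $\beta$-contraction with unique fixed point $J^i(\boldsymbol \pi^{*}_{n},\cdot)$, and the corresponding operator built from $\hat p_n, \hat c_n^i$, namely $\hat T^{\boldsymbol \pi^{*}_{n},i} J(x) \coloneqq \hat c_n^i(x,\boldsymbol \pi^{*}_{n}(x)) + \beta \int_{\sX} J(y)\, \hat p_n(dy|x,\boldsymbol \pi^{*}_{n}(x))$, which is also a $\beta$-contraction. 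Since $\boldsymbol \pi^*_n$ is a Nash equilibrium of the compact-state game $\sX_n$, the minimizing action in $\hat T^{\boldsymbol \pi^{*,-i}_{n}}_{n} \hat J_{n}^*(\boldsymbol \pi^{*,-i}_{n}; \cdot)$ is $\pi^{*,i}_n$, so $\hat J_{n}^*(\boldsymbol \pi^{*,-i}_{n}; \cdot)$ is the fixed point of $\hat T^{\boldsymbol \pi^{*}_{n},i}$; likewise $\hat J_n^*$ satisfies its extended fixed-point equation as stated just before the theorem.

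Next I insert a four-term triangle inequality exactly as in Theorem~\ref{discounted_thm_2}:
\begin{align*}
	\|J^i(\boldsymbol \pi^{*}_{n}, \cdot)-J^*(\boldsymbol \pi^{*,-i}_{n}; \cdot)\|_K
	&\leq \|T^{\boldsymbol \pi^{*}_{n},i} J^i(\boldsymbol \pi^{*}_{n}, \cdot) - T^{\boldsymbol \pi^{*}_{n},i} J^*(\boldsymbol \pi^{*,-i}_{n}; \cdot)\|_K \\
	&\phantom{xx}+ \|T^{\boldsymbol \pi^{*}_{n},i} J^*(\boldsymbol \pi^{*,-i}_{n}; \cdot) - \hat T^{\boldsymbol \pi^{*}_{n},i} J^*(\boldsymbol \pi^{*,-i}_{n}; \cdot)\|_K \\
	&\phantom{xx}+ \|\hat T^{\boldsymbol \pi^{*}_{n},i} J^*(\boldsymbol \pi^{*,-i}_{n}; \cdot) - \hat T^{\boldsymbol \pi^{*}_{n},i} \hat J_{n}^*(\boldsymbol \pi^{*,-i}_{n}; \cdot)\|_K \\
	&\phantom{xx}+ \|\hat J_{n}^*(\boldsymbol \pi^{*,-i}_{n}; \cdot) - J^*(\boldsymbol \pi^{*,-i}_{n}; \cdot)\|_K.
\end{align*}
For the second and third terms I use that, for $n$ large enough (say $n \geq n_0$) with $K \subset K_n$, the operators $T^{\boldsymbol \pi^{*}_{n},i}$ and $\hat T^{\boldsymbol \pi^{*}_{n},i}$ agree on $K$ because $\hat p_n = p$ and $\hat c_n^i = c^i$ there; hence the second term vanishes for $n \geq n_0$. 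For the third term, $\hat T^{\boldsymbol \pi^{*}_{n},i}$ being a $\beta$-contraction (in full sup-norm, hence contractive on the $K$-restriction up to tail mass, handled via the tightness argument of Theorem~\ref{noncompact_discounted_thm_1}) reduces it to $\beta$ times $\|J^*(\boldsymbol \pi^{*,-i}_{n}; \cdot) - \hat J_{n}^*(\boldsymbol \pi^{*,-i}_{n}; \cdot)\|$ on a slightly larger compact set, which tends to zero by Theorem~\ref{noncompact_discounted_thm_1}. The fourth term is exactly the quantity controlled by Theorem~\ref{noncompact_discounted_thm_1}. The first term is bounded by $\beta \|J^i(\boldsymbol \pi^{*}_{n}, \cdot)-J^*(\boldsymbol \pi^{*,-i}_{n}; \cdot)\|_K$, so it can be absorbed into the left-hand side after rearranging, yielding a bound of the form $\frac{1}{1-\beta}$ times (second $+$ third $+$ fourth terms), each of which tends to zero as $n \to \infty$.

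The main obstacle, as in Theorem~\ref{noncompact_discounted_thm_1}, is that the contraction property lives on the full space $B(\sX)$ whereas I only control convergence on compact sets; passing the transition integral $\int_{\sX} J\, p(dy|x,\ldots)$ from a $K$-estimate to a $K_\epsilon$-estimate requires the tightness of the family $\{p(\cdot|x,\boldsymbol\pi^{*,-i}_n(x),a^i)\}_{(x,n,a^i)\in K\times\N\times\sA^i}$ (which follows from weak continuity of $p$ and compactness of $K\times\boldsymbol\sA$) together with the uniform bound $\|J^*(\boldsymbol\pi^{*,-i}_n;\cdot)\|,\|\hat J_n^*(\boldsymbol\pi^{*,-i}_n;\cdot)\|\le M/(1-\beta)$. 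I would state this tail-mass lemma once and reuse it verbatim. The remaining steps — the absorption of the first term, the identification of fixed points using the Nash property, and the agreement $\hat p_n = p$, $\hat c_n^i = c^i$ on $K$ for large $n$ — are routine. Finally, since $J^*(\boldsymbol\pi^{*,-i}_n;x) = \inf_{\pi^i\in\Pi^i}\cE^{(\boldsymbol\pi^{*,-i}_n,\pi^i)}_x[\sum_{t}\beta^t c^i(x_t,\boldsymbol a_t)]$, this theorem implies that for every $\varepsilon>0$ and every compact $K$, for $n$ large the policy $\boldsymbol\pi^*_n$ is an $\varepsilon$-Nash equilibrium on $K$, and hence (as $K\uparrow\sX$) establishes the existence of approximate stationary perfect Nash equilibria for the non-compact-state game.
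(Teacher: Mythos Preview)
Your proposal has two genuine gaps. First, in the paper's setup $\boldsymbol\pi^*_n$ is \emph{not} an exact Nash equilibrium of the compact-state game on $\sX_n$: it is only an $\varepsilon(n)$-Nash equilibrium, obtained by applying Theorem~\ref{discounted_thm_2} to a finite approximation of that compact game (see the paragraph preceding Theorem~\ref{noncompact_discounted_thm_1}, where $\|J_n^i(\boldsymbol\pi^*_n,\cdot)-J_n^*(\boldsymbol\pi^{*,-i}_n;\cdot)\|\le\varepsilon(n)$ is all that is asserted). Hence your claim that $\pi^{*,i}_n$ achieves the minimum in $\hat T^{\boldsymbol\pi^{*,-i}_n}_n \hat J^*_n(\boldsymbol\pi^{*,-i}_n;\cdot)$, and therefore that $\hat J^*_n(\boldsymbol\pi^{*,-i}_n;\cdot)$ is the fixed point of $\hat T^{\boldsymbol\pi^*_n,i}$, is false. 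The fixed point of $\hat T^{\boldsymbol\pi^*_n,i}$ is $\hat J^i_n(\boldsymbol\pi^*_n,\cdot)$, the (extended) realized cost of player~$i$ under $\boldsymbol\pi^*_n$ in the compact game, and this differs from $\hat J^*_n(\boldsymbol\pi^{*,-i}_n;\cdot)$ by at most $\varepsilon(n)$ in sup-norm --- which is precisely how the paper uses it.

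Second, the absorption step does not work in the local norm: $\|T^{\boldsymbol\pi^*_n,i} J^i - T^{\boldsymbol\pi^*_n,i} J^*\|_K$ is bounded by $\beta$ times an integral of $|J^i(y)-J^*(y)|$ over all of $\sX$, not just $K$, so you cannot bound the first term by $\beta\,\|J^i-J^*\|_K$ and move it to the left-hand side. The tightness argument you invoke yields at best $\beta\,\|J^i-J^*\|_{K_\epsilon}+\text{tail}$ for a larger compact $K_\epsilon$, which still does not close the inequality. The paper avoids both problems by using the three-term decomposition
\[
\|J^i-J^*\|_K \le \|J^i-\hat J^i_n(\boldsymbol\pi^*_n,\cdot)\|_K + \|\hat J^i_n(\boldsymbol\pi^*_n,\cdot) - \hat J^*_n(\boldsymbol\pi^{*,-i}_n;\cdot)\|_K + \|\hat J^*_n(\boldsymbol\pi^{*,-i}_n;\cdot) - J^*\|_K:
\]
the middle term is $\le\varepsilon(n)$ by the $\varepsilon(n)$-Nash property, the last is Theorem~\ref{noncompact_discounted_thm_1}, and the first is handled by repeating the finite-iteration induction argument of Theorem~\ref{noncompact_discounted_thm_1} (not absorption) for the pair of evaluation operators $T^{\boldsymbol\pi^*_n,i}$ and $\hat T^{\boldsymbol\pi^*_n,i}$.
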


\begin{proof}

We prove the result by contraction property of the operators. We first define the following operator on $B(\sX)$:
$$
T^{\boldsymbol \pi^{*}_{n},i} J(x) \coloneqq c^i(x,\boldsymbol \pi^{*}_{n}(x)) + \beta \, \int_{\sX} J(y) \, p(dy|x,\boldsymbol \pi^{*}_{n}(x)). 
$$
It is trivial to prove that $T^{\boldsymbol \pi^{*}_{n},i}$ is $\beta$-contraction and the unique fixed point of it is $J^i(\boldsymbol \pi^{*}_{n}, \cdot)$. We also define the following operator on $B(\sX)$:
$$
\hat T^{\boldsymbol \pi^{*}_{n},i} J(x) \coloneqq \hat c_n^i(x,\boldsymbol \pi^{*}_{n}(x)) + \beta \, \int_{\sX} J(y) \, \hat p_n(dy|x,\boldsymbol \pi^{*}_{n}(x)). 
$$
This operator is also $\beta$-contraction and the unique fixed point of it is $\hat J_{n}^i(\boldsymbol \pi^{*}_{n}, \cdot)$, which is the extension of the cost $J_{n}^i(\boldsymbol \pi^{*}_{n}, \cdot)$ of player~$i$ in the compact-state game to the whole state space $\sX$. Using exactly the same arguments that we used in the proof of Theorem~\ref{noncompact_discounted_thm_1}, we can establish that
$$
\lim_{n \rightarrow \infty} \|J^i(\boldsymbol \pi^{*}_{n}, \cdot) - \hat J_{n}^i(\boldsymbol \pi^{*}_{n}, \cdot)\|_K = 0
$$
for any compact $K \subset \sX$.

With these observations, we then have 
\begin{align*}
	&\|J^i(\boldsymbol \pi^{*}_{n}, \cdot)-J^*(\boldsymbol \pi^{*,-i}_{n}; \cdot)\|_K  \\
	&\leq \|J^i(\boldsymbol \pi^{*}_{n}, \cdot) - \hat J_{n}^i(\boldsymbol \pi^{*}_{n}, \cdot)\|_K + \| \hat J_{n}^i(\boldsymbol \pi^{*}_{n}, \cdot) - \hat J_n^*(\boldsymbol \pi^{*,-i}_{n};\cdot) \|_K + \| \hat J_n^*(\boldsymbol \pi^{*,-i}_{n};\cdot) - J^*(\boldsymbol \pi^{*,-i}_{n}; \cdot)\|_K  \\
	&\leq \|J^i(\boldsymbol \pi^{*}_{n}, \cdot) - \hat J_{n}^i(\boldsymbol \pi^{*}_{n}, \cdot)\|_K + \varepsilon(n) + \| \hat J_n^*(\boldsymbol \pi^{*,-i}_{n};\cdot) - J^*(\boldsymbol \pi^{*,-i}_{n}; \cdot)\|_K  
\end{align*}

\noindent The third term in the last expression converges to zero as $n \rightarrow \infty$ by Theorem~\ref{noncompact_discounted_thm_1}. The first term converges to zero by above argument. By assumption $\varepsilon(n) \rightarrow 0$ as $n \rightarrow \infty$ as well. This completes the proof.  
\end{proof}
Note that for each $i=1,\ldots,N$, we have $ J^*(\boldsymbol \pi^{*,-i}_{n}; x) = \inf_{\pi \in \Pi^i} \cE^{(\boldsymbol \pi^{*,-i}_{n},\pi^i)}_x \left[ \sum_{t=0}^{\infty} \beta^t \, c^i(x_t,\boldsymbol a_t) \right].$ Hence, Theorem~\ref{noncompact_discounted_thm_2} implies that for any compact $K \subset \sX$ and for any $\varepsilon > 0$, there exists $n(K,\varepsilon)$ such that for any $n \geq n(K,\varepsilon)$, the policy $\boldsymbol \pi^{*}_{n}$ is stationary perfect $\varepsilon$-Nash equilibrium if the initial points are in $K$. Moreover, since Theorem~\ref{noncompact_discounted_thm_2} is still true if the initial time is greater than zero, we can conclude that stationary perfect $\varepsilon$-Nash equilibrium $\boldsymbol \pi^{*}_{n}$ is also subgame perfect $\varepsilon$-Nash equilibrium if the initial points are in $K$.

\section{Zero-Sum Markov Games and Markov Teams}\label{zerosum}

In this section, we address the approximation of zero-sum discounted Markov games and discounted Markov teams. As highlighted in the introduction, establishing the existence of stationary equilibria for zero-sum discounted Markov games is significantly simpler than in the nonzero-sum setting \cite{Sha53,MaPa70,Cou80,Now03}, \cite[Chapter 5]{BaZa18}. Additionally, since the methodology for proving the existence of stationary equilibria in zero-sum settings closely parallels that of MDPs through the use of optimality operators, finite approximations for such problems can also be achieved under mild conditions \cite{TiAl96,TiMaPoAl97,PeScPiPi15}, similar to the MDP setting \cite{SaYuLi15c,SaLiYuSpringer}. These cases highlight the distinctions between the zero-sum and nonzero-sum settings and demonstrate why the analysis of nonzero-sum settings is significantly more challenging. 

\subsection{Zero-Sum Markov Games}\label{zerosum1}

In the zero-sum setting, the gain of one player equals the loss of the other. Consequently, at equilibrium, the cost function achieved by one player is the negative of the cost function of the other. This symmetry ensures that there is a single cost, referred to as the \emph{value of the game}, that both players are concerned with. This value is unique under general conditions. Moreover, under mild regularity assumptions, it can be shown that the value function is a fixed point of a min-max contraction operator. The existence of stationary equilibria then follows from the min-max measurable selection theorem. This solution strategy closely mirrors the methods used in MDPs. Given that the analysis presented in this paper relies heavily on recent developments in \cite{SaYuLi15c,SaLiYuSpringer} about the approximation of MDPs, the results regarding the approximation of zero-sum discounted Markov games can also be established using the same proof techniques as those in \cite{SaYuLi15c,SaLiYuSpringer}. For this reason, the result stated below for zero-sum games is given without proof. 

\begin{tcolorbox}
	[colback=white!100]
	\begin{assumption}
		\label{zero-sum_as1}
		\begin{itemize}
			\item [ ]
			\item [(a)] The one-stage cost function $c$ is in $C_b(\sX \times \sA_1 \times \sA_2)$.
			\item [(b)] The stochastic kernel $p(\,\cdot\,|x,a_1,a_2)$ is weakly continuous in $(x,a_1,a_2)$.
			\item [(c)] $\sX$ is locally compact and $\{\sA_i\}_{i=1}^2$ are compact.
		\end{itemize}
	\end{assumption}
\end{tcolorbox} 

In this setting, the gain of the one player is the loss of the other player:
\begin{align*}
	J^1(\pi_1,\pi_2,x) &= \cE_{x}^{\pi_1,\pi_2}\biggl[\sum_{t=0}^{\infty}\beta^{t} \, c(x_{t},a_t^1,a_t^2)\biggr] = -J^2(\pi_1,\pi_2,x). \nonumber
\end{align*}
Hence, $J^*(x) := \min_{\pi_1} \max_{\pi_2} J^1(\pi_1,\pi_2,x)$ (i.e. the value of the game) is the unique fixed point of the contraction operator $T:C_b(\sX) \rightarrow C_b(\sX)$:
$$
TJ(x) \coloneqq \min_{\mu \in \P(\sA_2)} \max_{\nu \in \P(\sA_1)} \int_{\sA_1\times\sA_2} \left[c(x,a^1,a^2) + \beta \int_{\sX} J(y) \, p(dy|x,a^1,a^2) \right] \nu(da^1) \, \mu(da^2). 
$$
Once the value of the game is determined, the stationary equilibrium \((\pi_1^*, \pi_2^*)\) corresponds to the min-max measurable selectors of the above optimality equation when \(J\) is replaced by \(J^*\). In fact, it can also be shown that these optimal selectors are deterministic functions. This analysis is essentially analogous to the MDP setting (see \cite{HeLa96}), which is why the results from \cite{SaYuLi15c,SaLiYuSpringer}—originally developed for the approximation of MDPs—can be applied with only minor modifications. As a result, we can obtain the following theorem. 

\begin{theorem}\label{zerosum_thm}
	\begin{itemize}
		\item[ ]
		\item \textbf{Compact State}: Under Assumption~\ref{zero-sum_as1} and compactness of the state space, we have 
		$$
		\lim_{\delta \rightarrow 0} \|J^1(\pi^{1,*}_{\delta},\pi^{2,*}_{\delta}, \cdot)-J^*(\cdot)\| = 0, 
		$$
		where $J^1(\pi^{1,*}_{\delta},\pi^{2,*}_{\delta}, \cdot, \cdot)$ is the cost of player~$1$ under the equilibrium $(\pi^{1,*}_{\delta},\pi^{2,*}_{\delta})$ of the $\delta$-approximate game. 
		\item \textbf{Non-Compact State}: Under Assumption~\ref{zero-sum_as1}, we have 
		$$
		\lim_{n \rightarrow \infty} \|J^1(\pi^{1,*}_{n}, \pi^{2,*}_{n}, \cdot)-J^*(\cdot)\|_K = 0
		$$
		for any compact $K \subset \sX$, where $J^1(\pi^{1,*}_{n}, \pi^{2,*}_{n}, \cdot)$ is the cost of player~$1$ under the equilibrium $(\pi^{1,*}_{n}, \pi^{2,*}_{n})$ of the $n^{th}$-compact-state approximation. 
	\end{itemize}
\end{theorem}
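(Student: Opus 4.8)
The plan is to transcribe, step for step, the discounted nonzero-sum arguments (Theorems~\ref{discounted_thm_1}--\ref{discounted_thm_2} and \ref{noncompact_discounted_thm_1}--\ref{noncompact_discounted_thm_2}) with the per-player operator $T^{\boldsymbol\pi^{-i}}$ replaced by the min-max (Shapley) operator $T$. The structural simplification --- and the reason weak continuity of $p$ now suffices --- is that $T$ maps $C_b(\sX)$ into $C_b(\sX)$, so the value $J^*$ and every finite-stage iterate $J_t := T^t J_0$ are continuous; consequently all convergence statements are provable in sup-norm (resp.\ in $\|\cdot\|_K$), and the quantization errors can be controlled through the modulus of continuity of the \emph{continuous} function obtained by integrating a fixed continuous $J$ against $p$, rather than through a total-variation modulus as in Assumption~\ref{compact:as2}.

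\textbf{Compact state.} First build the $\delta$-approximate finite game exactly as in Section~\ref{finite_model}, with $c_\delta$ and $p_\delta$ the $\nu_\delta^i$-averages of $c$ and $Q_\delta\ast p$; let $T_\delta$ be its finite Shapley operator, a $\beta$-contraction whose unique fixed point $J^*_\delta$ is the value of the finite game (existence of the value and of deterministic stationary min-max selectors follows from \cite{Sha53} via a finite minimax argument), and let $\hat J^*_\delta = J^*_\delta\circ Q_\delta$ be its piecewise-constant extension, which is the fixed point of the extended operator $\hat T_\delta$ on $B(\sX)$. I would prove $\lim_{\delta\to0}\|\hat J^*_\delta - J^*\| = 0$ by the iteration scheme of Theorem~\ref{discounted_thm_1}: start both recursions from a common $J_0\in C_b(\sX)$ and show by induction on $t$ that $\|(\hat T_\delta)^tJ_0 - J_t\|\to0$, inserting $\hat T_\delta J_t$ as the intermediate term, so that the step bounds $\beta\|(\hat T_\delta)^tJ_0 - J_t\| + \|\hat T_\delta J_t - T J_t\|$, where the first summand vanishes by the induction hypothesis and the second, involving the \emph{fixed continuous} $J_t$, vanishes because (i) $c$ is uniformly continuous on the compact $\sX\times\sA_1\times\sA_2$, (ii) $(x,a^1,a^2)\mapsto\int J_t\,dp(\cdot|x,a^1,a^2)$ is continuous (weak continuity of $p$) hence uniformly continuous, so shifting $x$ within a bin of diameter at most $2\delta$ and replacing $p$ by $Q_\delta\ast p$ perturbs it by a quantity tending to $0$ uniformly, and (iii) $\min_{\mu\in\P(\sA_2)}\max_{\nu\in\P(\sA_1)}\int(\cdot)\,d(\nu\otimes\mu)$ is $1$-Lipschitz in its integrand in sup-norm; contraction of $T$ then lets $t\to\infty$, and the triangle inequality finishes. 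For the policies, set $\sigma_\delta := (\pi^{1,*}_\delta,\pi^{2,*}_\delta)$ extended to $\sX$ via $Q_\delta$; the cost $J^1(\pi^{1,*}_\delta,\pi^{2,*}_\delta,\cdot)$ is the fixed point of the $\beta$-contraction $T^{\sigma_\delta}J(x) = c(x,\pi^{1,*}_\delta(x),\pi^{2,*}_\delta(x)) + \beta\int J\,dp(\cdot|x,\pi^{1,*}_\delta(x),\pi^{2,*}_\delta(x))$, and since $\sigma_\delta$ attains the min-max in the finite game, $\hat J^*_\delta$ is the fixed point of the extended evaluation operator $\hat T^{\sigma_\delta}$; the four-term estimate of Theorem~\ref{discounted_thm_2} then yields $\|J^1(\sigma_\delta,\cdot) - J^*\| \le \frac{1}{1-\beta}\big(\|T^{\sigma_\delta}J^* - \hat T^{\sigma_\delta}J^*\| + (1+\beta)\|J^* - \hat J^*_\delta\|\big)$, and both terms on the right tend to $0$ by the previous step and by the same $c$, $p$ approximation estimates applied to the fixed continuous $J^*$.

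\textbf{Non-compact state.} Here I would reuse the two-stage construction of Section~\ref{noncompact}: replace $\sX$ by the truncations $\sX_n = K_n\cup\{\Delta_n\}$ with kernels $p_n$ and costs $c_n$ (these still satisfy Assumption~\ref{zero-sum_as1} on the compact $\sX_n$), apply the compact-state result to each to obtain an equilibrium $(\pi^{1,*}_n,\pi^{2,*}_n)$ arising from a finite game with $\|J^1_n(\pi^{1,*}_n,\pi^{2,*}_n,\cdot) - J^*_n(\cdot)\|\le\varepsilon(n)\to0$, and then run the induction/contraction argument of Theorem~\ref{noncompact_discounted_thm_1} in the sup-norm $\|\cdot\|_K$ on an arbitrary compact $K$, using that $\hat p_n = p$ and $\hat c_n = c$ on $K$ once $K\subset K_n$, and that the family $\{p(\cdot|x,\pi^*_n(x),a^1,a^2)\}_{x\in K,\,n\in\N,\,(a^1,a^2)\in\sA_1\times\sA_2}$ is tight (weak continuity of $p$ together with compactness of $K\times\sA_1\times\sA_2$), so that the tail mass outside a large compact $K_\epsilon$ is uniformly $\le\epsilon$; since $J^*\in C_b(\sX)$, the iterates $J_t$ converge to $J^*$ in $\|\cdot\|_K$, and assembling the pieces as in Theorem~\ref{noncompact_discounted_thm_2} gives $\|J^1(\pi^{1,*}_n,\pi^{2,*}_n,\cdot) - J^*\|_K\to0$.

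\textbf{Main obstacle.} The only genuinely technical point is the inductive step $\|\hat T_\delta g - Tg\|\to0$ for a \emph{fixed} $g\in C_b(\sX)$ (and its compact-approximation/non-compact analogues): under weak continuity of $p$ alone one cannot bound the operator difference uniformly over all bounded $J$, so the estimate must exploit that $(x,a^1,a^2)\mapsto\int g\,dp(\cdot|x,a^1,a^2)$ is continuous on a compact set --- hence uniformly continuous --- and that the quantization error $\big|\int g\,d(Q_\delta\ast p)(\cdot|x,\boldsymbol a) - \int g\,dp(\cdot|x,\boldsymbol a)\big|$ is bounded by the oscillation of $g$ over $\delta$-balls, which tends to $0$. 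This is precisely the equicontinuity argument carried out for MDPs in \cite{SaYuLi15c,SaLiYuSpringer}; the remainder --- existence of the finite-game equilibria, $\beta$-contraction bookkeeping, the triangle inequalities, and the tightness step in the non-compact case --- is routine.
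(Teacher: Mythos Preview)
Your approach is correct and matches what the paper intends: the paper explicitly omits the proof, stating only that ``the results regarding the approximation of zero-sum discounted Markov games can also be established using the same proof techniques as those in \cite{SaYuLi15c,SaLiYuSpringer}.'' Your transcription of the discounted nonzero-sum arguments with the Shapley operator $T$ in place of $T^{\boldsymbol\pi^{-i}}$, exploiting that $T$ preserves $C_b(\sX)$ so that $J^*$ and all finite-stage iterates $J_t$ are continuous (hence uniformly continuous on compacta, which is what lets you dispense with Assumption~\ref{compact:as2} and work under weak continuity alone), is precisely the MDP-style equicontinuity argument those references carry out, adapted to the min--max operator.

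One minor slip: the saddle-point selectors $(\pi^{1,*}_\delta,\pi^{2,*}_\delta)$ furnished by Shapley's theorem are \emph{randomized} stationary policies, not deterministic --- zero-sum stage games generically have no pure saddle point --- but this has no bearing on your argument, since the evaluation operator $T^{\sigma_\delta}$ and the fixed-point identity $\hat J^*_\delta = \hat T^{\sigma_\delta}\hat J^*_\delta$ work identically for mixed policies. Also, in the non-compact tightness step you can simply take the family $\{p(\cdot\,|\,x,a^1,a^2):x\in K,\,(a^1,a^2)\in\sA_1\times\sA_2\}$ directly (compact index set, weakly continuous kernel), without reference to $\pi^*_n$; your formulation is a subset of this and is of course also tight.
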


As can be seen, in the above theorem, we relax the setwise continuity condition, which is required in the non-zero sum setting, to a weak continuity condition. Furthermore, we no longer require Assumptions~\ref{compact:as2} and \ref{noncompact:as2}. Therefore, for the zero-sum setting, the approximation can be established under milder conditions compared to the non-zero sum setting.

\subsection{Markov Teams}

In the team setting, the players are minimizing the same cost function:
\begin{align*}
	J^i(\boldsymbol \pi,x) &= \cE_{x}^{\boldsymbol \pi}\biggl[\sum_{t=0}^{\infty}\beta^{t} \, c(x_{t},\boldsymbol a_t)\biggr] = J^j(\boldsymbol \pi,x), \,\, \forall i,j. \nonumber
\end{align*}
Hence, this problem is indeed equivalent to the MDP problem with the following components: $(\sX,\boldsymbol \sA, p, c)$. In this setting, as in the previous subsection, we impose the following conditions.

\begin{tcolorbox}
	[colback=white!100]
	\begin{assumption}
		\label{team_as1}
		\begin{itemize}
			\item [ ]
			\item [(a)] The one-stage cost function $c$ is in $C_b(\sX \times \boldsymbol \sA)$.
			\item [(b)] The stochastic kernel $p(\,\cdot\,|x,\boldsymbol a)$ is weakly continuous in $(x,\boldsymbol a)$.
			\item [(c)] $\sX$ is locally compact and $\{\sA_i\}_{i=1}^N$ are compact.
		\end{itemize}
	\end{assumption}
\end{tcolorbox}

In this setup, the optimal value of the team $J^*(x) := \min_{\boldsymbol \pi} J^i(\boldsymbol \pi,x)$ is the unique fixed point of the contraction operator $L:C_b(\sX) \rightarrow C_b(\sX)$:
\small
$$
LJ(x) \coloneqq \min_{\boldsymbol a \in \boldsymbol \sA}  \left[c(x,\boldsymbol a) + \beta \int_{\sX} J(y) \, p(dy|x,\boldsymbol a) \right]. 
$$
\normalsize
Once the optimal value of the game is determined, the stationary optimal joint policy \(\boldsymbol \pi\) corresponds to the measurable selector of the above optimality equation when \(J\) is replaced by \(J^*\). This analysis is exactly the same with the MDP setting (see \cite{HeLa96}), and so, the results from \cite{SaYuLi15c,SaLiYuSpringer} can be applied without any modification. As a result, we can derive the following theorem. 

\begin{theorem}\label{team_thm}
	\begin{itemize}
		\item[ ]
		\item \textbf{Compact State}: Under Assumption~\ref{team_as1} and compactness of the state space, we have 
		$$
		\lim_{\delta \rightarrow 0} \|J^i(\boldsymbol \pi^{*}_{\delta}, \cdot)-J^*(\cdot)\| = 0, 
		$$
		where $J^i(\boldsymbol \pi^{*}_{\delta}, \cdot)$ is the common cost under the optimal joint policy $\boldsymbol \pi^{*}_{\delta}$ of the $\delta$-approxi- mate team. 
		\item \textbf{Non- Compact State}: Under Assumption~\ref{team_as1}, we have 
		$$
		\lim_{n \rightarrow \infty} \|J^i(\boldsymbol \pi^{*}_{n}, \cdot)-J^*(\cdot)\|_K = 0
		$$
		for any compact $K \subset \sX$, where $J^i(\boldsymbol \pi^{*}_{n}, \cdot)$ is the common cost under the optimal joint policy $\boldsymbol \pi^{*}_{n}$ of the $n^{th}$-compact-state approximation. 
	\end{itemize}
\end{theorem}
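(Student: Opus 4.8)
The plan is to reduce everything to the single-agent MDP case and then quote the finite-model approximation theorems of \cite{SaYuLi15c,SaLiYuSpringer}. As already observed, a Markov team with data $(\sX,\sA^1,\ldots,\sA^N,c,\ldots,c,p)$ is exactly the MDP with state space $\sX$, action space $\boldsymbol \sA = \prod_{i=1}^N \sA^i$, one-stage cost $c$, and transition kernel $p$; a stationary optimal policy of this MDP is a measurable map $\sX \to \boldsymbol \sA$, which decomposes coordinatewise into a joint policy $\boldsymbol \pi = (f^1,\ldots,f^N)$ with $f^i : \sX \to \sA^i$, and the value $J^i(\boldsymbol \pi,\cdot)$ under any such joint policy is the MDP cost and is independent of $i$. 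Under Assumption~\ref{team_as1} this MDP is weakly continuous with bounded continuous cost, compact action space, and locally compact state space --- precisely the setting of \cite{SaYuLi15c,SaLiYuSpringer} --- and the $\delta$-approximation of Section~\ref{finite_model} (respectively the compact-state construction of Section~\ref{compact_model}) is the approximating MDP used there.

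For the compact-state part, I would first recall from \cite{HeLa96} that $L$ is a $\beta$-contraction on $C_b(\sX)$, so $J^* \in C_b(\sX)$ is its unique fixed point, and that a deterministic stationary minimizer of the right-hand side of the optimality equation with $J=J^*$ exists (by a measurable selection argument) and is optimal; the finite $\delta$-approximate team, as a finite MDP, likewise has a deterministic stationary optimal policy $\boldsymbol \pi^*_{\delta}$. The substance is then the two-step convergence of \cite{SaYuLi15c,SaLiYuSpringer}: (i) the value functions $\hat J^*_{\delta}$ of the finite models, extended piecewise constantly to $\sX$, converge to $J^*$ in sup-norm; and (ii) the cost $J^i(\boldsymbol \pi^*_{\delta},\cdot)$ incurred in the original model by the piecewise-constant extension of $\boldsymbol \pi^*_{\delta}$ converges to $J^*$ in sup-norm. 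Step (i) is obtained by comparing the finite-model Bellman operator extended to $B(\sX)$ with $L$ and estimating $\bigl| \int J^* \, d(Q_{\delta}\ast p(\cdot|x,\boldsymbol a)) - \int J^* \, dp(\cdot|x,\boldsymbol a) \bigr|$: here one uses weak continuity of $p$ together with the continuity of $J^*$ to drive this error to zero as $\delta \to 0$. Step (ii) then follows from (i) by the contraction property of the policy-evaluation operators, exactly as in the proof of Theorem~\ref{discounted_thm_2}.

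For the non-compact part, I would follow the three-step strategy of Section~\ref{noncompact}: replace $\sX$ by the compact-state spaces $\sX_n = K_n \cup \{\Delta_n\}$ with kernels $p_n$ and costs $c_n$ as in Section~\ref{compact_model}; note that weak continuity --- unlike setwise continuity, and so without needing any analogue of Assumption~\ref{compact:as2} --- is inherited by each $p_n$, so each compact-state team satisfies Assumption~\ref{team_as1} on a compact state space and the compact-state result applies, producing $\varepsilon(n)$-optimal policies with $\varepsilon(n)\to 0$. Then, using tightness of $\{p(\cdot|x,\boldsymbol a)\}$ over compact sets of $(x,\boldsymbol a)$ (a consequence of weak continuity, exactly as in the proof of Theorem~\ref{noncompact_discounted_thm_1}), one shows that the extended compact-state optimal values $\hat J^*_n$ converge to $J^*$ uniformly on each compact $K \subset \sX$, and likewise that the costs of the extended compact-state optimal policies converge to $J^*$ uniformly on compacts; a triangle inequality combining these bounds with $\varepsilon(n)$ yields the claim.

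The one genuinely substantive point, as opposed to the routine contraction and triangle-inequality bookkeeping already carried out in Sections~\ref{discounted} and~\ref{noncompact}, is step (i) of the compact-state case: the continuity of the optimal value function $J^*$ under weak continuity of $p$, which is what makes the integral of $J^*$ against $Q_{\delta}\ast p(\cdot|x,\boldsymbol a)$ versus $p(\cdot|x,\boldsymbol a)$ error-controllable, and which is available in the team/MDP setting but fails for general nonzero-sum games. Once this is in hand, the rest of the argument transfers essentially verbatim from the MDP treatment of \cite{SaYuLi15c,SaLiYuSpringer}, which is why the theorem is stated without a separate proof.
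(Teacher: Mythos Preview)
Your reduction to the single-agent MDP and appeal to \cite{SaYuLi15c,SaLiYuSpringer} is precisely the paper's approach --- the paper gives no proof beyond remarking that those results ``can be applied without any modification'' --- and your identification of the continuity of $J^*$ under weak Feller continuity as the key point enabling the relaxation from setwise to weak continuity (and dispensing with Assumption~\ref{compact:as2}) is exactly right. One small caveat: weak continuity of $p$ does not automatically pass to the truncated kernels $p_n$ of Section~\ref{compact_model} (one would need $p(\partial K_n\mid x,\boldsymbol a)=0$ for all $(x,\boldsymbol a)$, since the indicator of $K_n^c$ is not continuous), so for the non-compact part it is cleaner to invoke the non-compact MDP machinery of \cite{SaYuLi15c,SaLiYuSpringer} directly rather than routing through the paper's $\sX_n$ construction; this is a technicality handled in those references and does not affect your overall strategy.
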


\section{Conclusion}

In this paper, we have established the existence of near Markov and stationary perfect Nash equilibria for nonzero-sum stochastic games using a finite state-action approximation method, under both finite-horizon and discounted cost criteria, addressing both compact and non-compact state spaces. 



\end{document}